\renewcommand{\title}[1]{\vspace{\fill}
\eject\addtolength{\baselineskip}{4pt}
{\bfseries\LARGE #1}\\[3mm]\addtolength{\baselineskip}{-4pt}}
\renewcommand{\author}[3]{\parbox[t]{75mm}
{\begin{center}{\scshape #1}\\[3mm] #2\\
 {\ttfamily #3} \end{center}}}
\newtheorem{thm}{\bfseries Theorem}
\newtheorem{lem}[thm]{\bfseries Lemma}
\newtheorem{remark}[thm]{\bfseries Remark}
\newtheorem{prop}[thm]{\bfseries Proposition}
\newtheorem{defn}[thm]{\bfseries Definition}
\newtheorem{conj}[thm]{\bfseries Conjecture}
\newenvironment{proof}{\medskip
\noindent{\scshape Proof:}}{\quad $\Box$\medskip}
\begin{document}

\begin{center}

\title{Quantum-Relaxation Based Optimization Algorithms: Theoretical Extensions} 
\author{\underline{Kosei Teramoto}
}{
Department of Computer Science, \\
The University of Tokyo
}{
teramoto@is.s.u-tokyo.ac.jp
}
\author{Rudy Raymond
}{
IBM Quantum, IBM Japan \\
Department of Computer Science, \\
The University of Tokyo \\
Quantum Computing Center, Keio University
}{
rudyhar@jp.ibm.com
}
\author{
Eyuri Wakakuwa
}{ 
Department of Computer Science, \\
The University of Tokyo
}{
eyuriwakakuwa@is.s.u-tokyo.ac.jp
}
\author{
Hiroshi Imai
}{ 
Department of Computer Science, \\
The University of Tokyo
}{
imai@is.s.u-tokyo.ac.jp
}
\end{center}

\begin{quote}
{\bfseries Abstract:}
Quantum Random Access Optimizer (QRAO) is a quantum-relaxation based optimization algorithm proposed by Fuller et al. that utilizes Quantum Random Access Code (QRAC) to encode multiple variables of binary optimization in a single qubit.
The approximation ratio bound of QRAO for the maximum cut problem is $0.555$ if the bit-to-qubit compression ratio is $3$x, while it is $0.625$ if the compression ratio is $2$x, thus demonstrating a trade-off between space efficiency and approximability.
In this research, we extend the quantum-relaxation by using another QRAC which encodes three classical bits into two qubits (the bit-to-qubit compression ratio is $1.5$x) and obtain its approximation ratio for the maximum cut problem as $0.722$.
Also, we design a novel quantum relaxation that always guarantees a $2$x bit-to-qubit compression ratio which is unlike the original quantum relaxation of Fuller~et~al. We analyze the condition when it has a non-trivial approximation ratio bound $\left(>\frac{1}{2}\right)$.
We hope that our results lead to the analysis of the quantum approximability and practical efficiency of the quantum-relaxation based approaches.
\end{quote}

\begin{quote}
{\bf Keywords: Quantum-Relaxation, Quantum Random Access Codes, Quantum State Rounding, Maximum Cut Problem, Quantum Approximability}
\end{quote}
\vspace{5mm}

\section{Introduction}
\subsection{Backgrounds}
Solving optimization problems is one of the most important tasks for which quantum computation is expected to be useful.
Various quantum algorithms have been devised for NP-hard optimization problems such as QAOA (Quantum Approximate Optimization Algorithms)~\cite{farhi2014quantum} proposed by Farhi, Goldstone, and Gutmann, and VQE (Variational Quantum Eigensolver)~\cite{peruzzo2014variational} proposed by Peruzzo et al. Although QAOA and VQE are classical-quantum hybrid algorithms designed for near-term devices capable of running only shallow circuits, there are some critical issues. The first issue is scalability. Because QAOA and VQE encode one classical bit into one qubit and the number of qubits of near-term quantum devices is at most several hundred qubits, the problem instance sizes are highly limited. The second issue is that we do not know if \textit{quantumness} (i.e. quantum entanglement) of constant-depth QAOA and VQE can give rise to a better result than the classical optimization algorithms, as indicated in~\cite{nannicini2019performance}. In other words, for combinatorial optimization, QAOA and VQE may not be attractive to be run on a quantum computer in the first place.

Recently, a new classical-quantum hybrid optimization algorithm, QRAO (Quantum Random Access Optimization)~\cite{fuller2021approximate} was proposed by Fuller et al. to address the above issues.
Specifically, the QRAO encodes multiple classical bits (less than or equal to three) into one qubit using the $(3,1)$-QRAC (Quantum Random Access Code)~\cite{ambainis2002dense,hayashi20064}.
Here, $(m,n)$-QRAC means the quantum random access codes which encode $m$ classical bits into $n$ qubits.
Due to this constant-factor improvement in scalability, Fuller et al. were able to perform experiments with QRAO on superconducting quantum devices to solve the largest instances of a maximum cut problem (up to 40 nodes using only 15 qubits).
Also, since QRAO searches for quantum states that correspond to solutions to the relaxation problem rather than classical solutions, the quantum state that is eventually discovered is an entangled state that cannot be directly interpreted as a classical solution.
Because of this, the methods like QRAO are called \textit{quantum-relaxation} and have been extended for more general quadratic programs~\cite{https://doi.org/10.48550/arxiv.2301.01778}. 
To obtain the classical solution, \textit{quantum state rounding} of the relaxed solution must be performed. 
Therefore, compared to standard VQE methods, QRAO may benefit from quantum entanglement if the entangled states result in better relaxed values. In other words, QRAO is inherently different from standard quantum-classical hybrid algorithms like QAOA and may benefit from quantum mechanical properties.
There exists an experimental result that there are some instances for which entanglement helps QRAO find optimal solutions~\cite{teramoto2023role}.

The quantum state rounding algorithm (\textit{magic state rounding}) used in QRAO is inspired by Goemans and Williamson's approximation algorithm for the maximum cut problem with an approximation ratio of $0.879$~\cite{goemans1995improved}. 
It randomly chooses the pair of two-bit-inverted relationships and decodes the encoded bits into one of the two candidates by performing the corresponding quantum measurement.
By quantum information theoretic analysis, it is proved that the approximation ratio of quantum-relaxation using $(3,1)$-QRAC is $0.555$ and that of quantum-relaxation using $(2,1)$-QRAC is $0.625$~\cite{fuller2021approximate}. While the optimality of standard QAOA or VQE is often assumed when the obtained quantum state is the ground state, the approximation ratios of QRAO are obtained regardless of the reachability of the ground state. Namely, the ratios are guaranteed as long as the relaxed value of the obtained quantum state exceeds that of the classical optimal value. This is crucial as finding the exact ground state can be extremely hard~\cite{kempe2006complexity}.  

The approximation ratios of $(3,1)$- and $(2,1)$-QRAC imply that the higher the space compression ratio the lower the approximation ratio is. There is a trade-off between space efficiency and approximability.
The approximation ratio bound of QRAO is much lower than Goemans and Williamson's $0.879$~\cite{goemans1995improved} which is proved to be optimal under the UGC (Unique Game Conjecture)~\cite{khot2007optimal}.
This is because the success probability of decoding each bit of the QRACs used in QRAO is not high.
The success probability of decoding each encoded bit is $\frac{1}{2}+\frac{1}{2\sqrt{2}}\approx0.85$ for $(2,1)$-QRAC and $\frac{1}{2}+\frac{1}{2\sqrt{3}}\approx0.79$ for $(3,1)$-QRAC~\cite{ambainis2002dense,hayashi20064}.

\subsection{Our Results}
In this research, we extend the quantum-relaxation in two ways: $(i)$ we introduce the use of $(3,2)$-QRAC to obtain a better approximation ratio with a slightly lower bit-to-qubit compression ratio, and  $(ii)$ we design a novel quantum-relaxation that always guarantees 2x bit-to-qubit compression ratio which is unlike the original quantum relaxation of Fuller et al. For $(i)$, we will show the formulation of the $(3,2)$-QRAC which encodes three classical bits into two qubits obtained by numerical calculation~\cite{imamichi2018constructions}.
The success probability of decoding each encoded bit is $\frac{1}{2}+\frac{1}{\sqrt{6}}\approx0.908$, and it is optimal among all $(3,2)$-QRACs based on the bound by Manv{\v{c}}inska and Storgaard~\cite{manvcinska2022geometry}.
Also, we extended the quantum relaxation by using this $(3,2)$-QRAC.
The instance of the problem is encoded into the problem Hamiltonian, and the maximum eigenstate of the Hamiltonian is explored.
By performing the quantum state rounding algorithm, we obtain the classical binary solution to the problem.
Furthermore, we proved the approximation ratio bound of the above quantum-relaxation based optimization algorithm for the MaxCut problem as $\frac{13}{18}\approx 0.722$.
The only assumption of the proof of the approximation ratio is the same as the one using $(3,1)$- or $(2,1)$-QRACs, that is, the energy of the found candidate quantum state for the maximum eigenstate of the problem Hamiltonian exceeds the optimum value of the original problem instance.
Although the space compression ratio of our quantum relaxation is $\frac{3}{2}=1.5$ and is lower than the one using $(3,1)$- or $(2,1)$-QRACs, the approximation ratio bound is better.
Our result is consistent with the trade-off between the space compression ratio and the approximability of the maximum cut problem.
Though the obtained approximation ratio bound $0.722$ is lower than that of Goemans and Williamson, the practical feasibility of quantum-relaxation based approaches is enhanced.

To always guarantee the bit-to-qubit compression ratio of QRAO using $(3,1)$-QRAC is essential as in the original QRAO the ratio becomes lower as the density of the graph instance increases.
This is because there is a constraint that the endpoints of each edge must be associated with different qubits.
For example, if the graph instance is the complete graph, then the number of qubits needed to run QRAO is the same as the number of vertices.
In such cases, the quantum-relaxation based optimizer has no space advantage against standard QAOA and VQE algorithms.
In this research, for $(ii)$, we propose new types of encoding which encode up to two classical bits into a single-qubit by using the $(3,1)$-QRAC.
The third encoded bit's position in $(3,1)$-QRAC corresponds to the parity of the two bits.
This modification allows us to remove the constraint that the endpoints of each edge have to be assigned to different qubits.
The space compression ratio of the algorithm is always 2x which is independent of the density of the graph instances.
Unfortunately, non-trivial approximation ratio bound $\left(>\frac{1}{2}\right)$ does not exist generally.
We calculate the approximation ratio of this new algorithm by using two parameters $\epsilon$ and $\lambda$ as $\max\left\{\frac{81-14\sqrt{3}+14\sqrt{3}\lambda+8\epsilon}{81+162\epsilon},\frac{27-14\lambda+12\epsilon}{27+54\epsilon}\right\}$.
The parameter $\epsilon$ is defined by the equation $\mathrm{OPT}=\left(\frac{1}{2}+\epsilon\right)|E|$ where $\mathrm{OPT}$ is the optimal cut value, and therefore $\epsilon$ quantifies the so-called MaxCutGain~\cite{CharikarWirth2004}.
The parameter $\lambda$ is the ratio of the edges whose endpoints are assigned to different qubits.
By using the approximation ratio bound, we analyze the condition of the graph instance that our algorithm gives a non-obvious approximation ratio bound for the maximum cut problem.
We hope that our results lead to the analysis of the quantum approximability and practical efficiency of the quantum-relaxation based approaches.

\section{Preliminaries}
\subsection{Basics of Quantum Computing}
A classical bit is either $0$ or $1$.
On the other hand, in quantum mechanics, a quantum bit (\textit{qubit}) is represented by a two-dimensional complex unit vector in a Hilbert space.
There are two basis vectors corresponding to the classical $0$ or $1$ as below.
\begin{equation}
    \ket{0}:=\begin{pmatrix}1\\0\end{pmatrix},\ket{1}:=\begin{pmatrix}0\\1\end{pmatrix}
\end{equation}
$\ket{0}$ is read as "ket $0$".
The counterpart expression is "bra", and it represents a Hermitian conjugate of "ket".
The state of a qubit then is expressed as a linear combination of the two basis vectors as
\begin{equation}
    \ket{\psi}:=\alpha\ket{0}+\beta\ket{1}=\begin{pmatrix}\alpha\\\beta\end{pmatrix},
    \label{eq:state}
\end{equation}
where the coefficients $\alpha$ and $\beta$ are complex numbers and and satisfy the condition $|\alpha|^2+|\beta|^2=1$.
The Kronecker product denoted by $\otimes$ is used to describe multiple qubit states. For example, the two qubits states made of two single qubit states $\ket{\psi}=\alpha\ket{0}+\beta\ket{1}$ and $\ket{\varphi}=\gamma\ket{0}+\delta\ket{1}$ are described as a unit vector in $\mathbb{C}^2\otimes\mathbb{C}^2=\mathbb{C}^4$ like the following.
\begin{equation}
    \ket{\psi}\otimes\ket{\varphi}=\alpha\gamma\ket{0}\otimes\ket{0}+\alpha\delta\ket{0}\otimes\ket{1}+\beta\gamma\ket{1}\otimes\ket{0}+\beta\delta\ket{1}\otimes\ket{1}=\begin{pmatrix}\alpha\gamma\\\alpha\delta\\\beta\gamma\\\beta\delta\end{pmatrix}
\end{equation}

Generally, $n$ qubits states are expressed as a normalized vector in the Hilbert space $(\mathbb{C}^2)^{\otimes n}=\mathbb{C}^{2^n}$.
For simplicity, we sometimes just write $\ket{b_1}\otimes\ket{b_2}\otimes\cdots\otimes\ket{b_n}$ as $\ket{b_1b_2\cdots b_n}$ and $\bra{b_1}\otimes\bra{b_2}\otimes\cdots\otimes\bra{b_n}$ as $\bra{b_1b_2\cdots b_n}$ where each $b_i\in \{0,1\}$.

Quantum operations to $n$ qubits state in closed quantum systems are described as $2^n\times 2^n$ Unitary matrices whose elements are complex numbers.
In a quantum circuit model, we describe quantum operations by using quantum gates.
The Pauli $X$, $Y$, and $Z$ gates (or operators) are single qubit gates and are used to represent the spin of single qubits.
\begin{equation}
    X=\begin{pmatrix}0&1\\1&0\end{pmatrix},Y=\begin{pmatrix}0&-i\\i&0\end{pmatrix},Z=\begin{pmatrix}1&0\\0&-1\end{pmatrix}
\end{equation}
The Pauli $X$ gate behaves like the NOT gate in classical circuits and it maps $\ket{0}$ to $\ket{1}$ and $\ket{1}$ to $\ket{0}$.
The Pauli $Y$ gate maps $\ket{0}$ to $i\ket{1}$ and $\ket{1}$ to $-i\ket{0}$. 
The Pauli $Z$ gate maps $\ket{0}$ to $\ket{0}$ and $\ket{1}$ to $-\ket{1}$.
The eigenvalues of the Pauli $X$, $Y$, $Z$ operators are $1$ and $-1$, and the corresponding eigenvectors (or eigenstates) are $\{\ket{+},\ket{-}\}$, $\{\ket{+i},\ket{-i}\}$, and $\{\ket{0},\ket{1}\}$ where
\begin{align}
    \ket{+}=\frac{1}{\sqrt{2}}\left(\ket{0}+\ket{1}\right)=\frac{1}{\sqrt{2}}\begin{pmatrix}1\\1\end{pmatrix},\ket{-}=\frac{1}{\sqrt{2}}\left(\ket{0}-\ket{1}\right)=\frac{1}{\sqrt{2}}\begin{pmatrix}1\\-1\end{pmatrix},\\
    \ket{+i}=\frac{1}{\sqrt{2}}\left(\ket{0}+i\ket{1}\right)=\frac{1}{\sqrt{2}}\begin{pmatrix}1\\i\end{pmatrix},\ket{-i}=\frac{1}{\sqrt{2}}\left(\ket{0}-i\ket{1}\right)=\frac{1}{\sqrt{2}}\begin{pmatrix}1\\-i\end{pmatrix}.
\end{align}

Generally, quantum measurements are defined as positive operator-valued measures (POVMs).
An operator $U$ is positive semidefinite if for all non-zero vector $\ket{v}$, $\bra{v}U\ket{v}\geq 0$.
A POVM consists of a set of positive semidefinite operators $\{M_a\}$ indexed by the measurement outcomes $a\in S$ satisfying the condition $\sum_a M_a=I$.
If we measure the quantum state $\ket{\psi}$ with the POVM $\{M_a\}$, then the probability that the measurement results is $a$ is given by
\begin{equation}
    \mathrm{Pr}(a):=\bra{\psi}M_a\ket{\psi},
\end{equation}
and the post-measurement state becomes
\begin{equation}
    \frac{\sqrt{M_a}\ket{\psi}}{\sqrt{\mathrm{Pr}(a)}}.
\end{equation}
It is known that a positive semidefinite operator has a unique square root of it $\sqrt{M_a}$.
A POVM $\{M_a\}$ is \textit{projective} if each operator $M_a$ satisfies the condition $M_a^2=M_a$.
We call such quantum measurements as \textit{projective measurements}.
If all of the operators $M_a$ of a projective measurement has matrix rank one, then it is called \textit{rank-1 projective measurement}.

We have formulated quantum mechanics using a vector in Hilbert space.
On the other hand, there is an alternative formulation, \textit{density operators}.
The density operator formalism is equivalent to the state vectors, but it's sometimes more convenient to describe the quantum system or observe the characteristics of the quantum state.
Consider the situation that the quantum state is one of the states $\ket{\psi}$ indexed by $i$ with probability $p_i$ respectively.
We call the set of the tuple of the probability and the state $\{p_i,\ket{\psi}\}$ as \textit{ensemble of pure states}.
The density operator for this system is defined by the equation:
\begin{equation}
    \rho:=\sum_ip_i\ket{\psi_i}\bra{\psi_i}.
\end{equation}
In the case of the state vectors, if we perform some unitary $U$ to the state $\ket{\psi}$, then the state is converted to $U\ket{\psi}$.
In the density operator formalism, the density operator $\rho$ is converted to $U\rho U^{\dag}$.
If we measure the state with the POVM $\{M_a\}$, then the probability that the measurement outcome is $a$ is
\begin{equation}
    \mathrm{Pr}(a):=\mathrm{Tr}[M_a\rho],
\end{equation}
and the post-measurement density operator is
\begin{equation}
    \frac{\sqrt{M_a}\rho\sqrt{M_a}^{\dag}}{\mathrm{Pr}(a)}.
\end{equation}
By using the density operator formulation, we can see a single qubit state differently.
It is known that a single qubit quantum state (that may be a mixed state) can be formulated like the following equation.
\begin{equation}
    \rho=\frac{1}{2}\left(I+r_xX+r_yY+r_zZ\right)
    \label{eq:qubit_density}
\end{equation}
where $r_x$, $r_y$, and $r_z$ are the real numbers satisfying the condition:
\begin{equation}
    r_x^2+r_y^2+r_z^2\leq 1.
\end{equation}

\subsection{Quantum Random Access Codes}
\label{sec:qrac}
The $n$ qubits are represented by a vector in $\mathbb{C}^{2^n}$ and seem to have much more information than the classical $n$ bits.
However, it is known that $n$ qubits are needed to transfer $n$-bit classical information without error by Holevo bound~\cite{holevo1979capacity}.
On the other hand, if we admit some errors, we can encode multiple classical bits into a single qubit by using $(n,1,p)$-QRA codes~\cite{ambainis2002dense}.
\begin{defn}[$(n,1,p)$-QRA codes~\cite{ambainis2002dense}]
    An $(n,1,p)$-QRA coding is a function that maps $n$-bit strings $x\in\{0,1\}^n$ to $1$-qubit states $\rho_x$ satisfying the following conditions that for every $i\in\{1,2,...,n\}$, there exists a POVM
    \begin{equation*}
        E^i=\{E^i_0,E^i_1\}
    \end{equation*}
    such that
    \begin{equation*}
        \mathrm{Tr}(E^i_{x_i}\rho_x)\geq p
    \end{equation*}
    for all $x\in\{0,1\}^n$, where $x_i$ is the $i$-the bit of $x$.
\end{defn}
The POVM $E^i$ corresponds to the decoding process.
By measuring the encoded state $\rho_x$ with the POVM $E^i$, we can decode the $i$-th encoded bits $x_i$ with probability $p$.
We noted that $(n,1,p)$-QRA codes is meaningless if $p\leq\frac{1}{2}$ because $p=\frac{1}{2}$ is equivalent to randomly choosing binary bits.
$(n,m,p)$-QRA coding for $m\geq 2$ can also be defined in the same way.
There exists $(2,1,0.85)$- and $(3,1,0.79)$-QRA codings~\cite{ambainis2002dense} which are used in QRAO~\cite{fuller2021approximate}.
\begin{prop}[$(2,1,0.85)$-QRA codes~\cite{ambainis2002dense}]
    Consider the map
    \begin{equation}
        (x_1,x_2)\mapsto\rho_{x_1,x_2}:=\frac{1}{2}\left(I+\frac{1}{\sqrt{2}}((-1)^{x_1}X+(-1)^{x_2}Z)\right).
        \label{eq:21qrac}
    \end{equation}
    For every pair of $(x_1,x_2)$, $\rho_{x_1,x_2}$ is a pure state and can be written in the form $\rho_{x_1,x_2}=\ket{\psi(x_1,x_2)}\bra{\psi(x_1,x_2)}$ where
    \begin{align*}
        \ket{\psi(0,0)}&=\cos{\frac{\pi}{8}}\ket{0}+\sin{\frac{\pi}{8}}\ket{1},
        \ket{\psi(0,1)}=\cos{\frac{3\pi}{8}}\ket{0}+\sin{\frac{3\pi}{8}}\ket{1}\\
        \ket{\psi(1,0)}&=\cos{\frac{5\pi}{8}}\ket{0}+\sin{\frac{5\pi}{8}}\ket{1},
        \ket{\psi(1,1)}=\cos{\frac{7\pi}{8}}\ket{0}+\sin{\frac{7\pi}{8}}\ket{1}
    \end{align*}
    Then, this map is a $(2,1,0.85)$-QRA coding with the POVMs (projective measurements, in fact):
    \begin{equation}
        E^1=\{\ket{+}\bra{+},\ket{-}\bra{-}\},E^2=\{\ket{0}\bra{0},\ket{1}\bra{1}\}.
        \label{eq:21povm}
    \end{equation}
    \label{prop:21qrac}
\end{prop}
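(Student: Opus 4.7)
The plan is to establish the two assertions of the proposition in sequence: first, that each $\rho_{x_1,x_2}$ defined by equation~\eqref{eq:21qrac} is a pure state equal to $\ket{\psi(x_1,x_2)}\bra{\psi(x_1,x_2)}$; second, that under the POVMs of equation~\eqref{eq:21povm} every encoded bit is recovered with probability at least $0.85$ --- in fact exactly $\frac{1}{2}+\frac{1}{2\sqrt{2}}\approx 0.8536$.

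For purity, I would compare equation~\eqref{eq:21qrac} with the general single-qubit representation in equation~\eqref{eq:qubit_density}. The Bloch vector is $(r_x,r_y,r_z)=((-1)^{x_1}/\sqrt{2},\,0,\,(-1)^{x_2}/\sqrt{2})$, whose squared norm is $1$. This immediately places $\rho_{x_1,x_2}$ on the Bloch sphere and identifies it as a pure state. To obtain the ket representation explicitly, I would convert the Bloch vector to spherical coordinates and invoke the standard parameterization $\ket{\psi}=\cos(\theta/2)\ket{0}+e^{i\phi}\sin(\theta/2)\ket{1}$, then verify by direct expansion of $\ket{\psi(x_1,x_2)}\bra{\psi(x_1,x_2)}$ that its diagonal and off-diagonal entries match those of $\rho_{x_1,x_2}$ (up to a global phase, which does not affect the density operator).

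For the decoding probability, I would write the projectors in Pauli form: $\ket{+}\bra{+}=\frac{1}{2}(I+X)$, $\ket{-}\bra{-}=\frac{1}{2}(I-X)$, $\ket{0}\bra{0}=\frac{1}{2}(I+Z)$, $\ket{1}\bra{1}=\frac{1}{2}(I-Z)$, so that $E^1_{x_1}=\frac{1}{2}(I+(-1)^{x_1}X)$ and $E^2_{x_2}=\frac{1}{2}(I+(-1)^{x_2}Z)$. Combined with the standard Pauli trace relations $\mathrm{Tr}(I)=2$ and $\mathrm{Tr}(\sigma\sigma')=2\delta_{\sigma,\sigma'}$ for $\sigma,\sigma'\in\{X,Y,Z\}$, a direct trace calculation gives $\mathrm{Tr}(E^1_{x_1}\rho_{x_1,x_2})=\mathrm{Tr}(E^2_{x_2}\rho_{x_1,x_2})=\frac{1}{2}+\frac{1}{2\sqrt{2}}$, independently of the unmeasured bit, because cross-terms with mismatched Paulis vanish and the sign product $(-1)^{x_i}\cdot(-1)^{x_i}=+1$. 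Since $\frac{1}{2}+\frac{1}{2\sqrt{2}}\geq 0.85$, both QRAC conditions are satisfied uniformly over the four inputs.

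I anticipate no genuine obstacle: every step reduces to single-qubit Bloch-sphere bookkeeping or a one-line Pauli-trace computation. The only care needed is to choose consistent Bloch angles across the four sign cases (remembering that a global phase does not affect the density operator when matching kets to the stated angular form) and to track the signs $(-1)^{x_i}$ so that the projector used for decoding is the one aligned with the corresponding component of the Bloch vector, which produces the cancellation making the bound uniform.
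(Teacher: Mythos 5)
The paper does not prove this proposition --- it is quoted from Ambainis et al.\ with a citation --- so there is no in-paper argument to compare against; your verification is correct and is the standard one. The Bloch-vector norm computation establishes purity, the half-angle identities $\cos^2(\pi/8)=\tfrac12+\tfrac{1}{2\sqrt2}$ and $\cos(\pi/8)\sin(\pi/8)=\tfrac{1}{2\sqrt2}$ confirm the stated kets, and the Pauli-trace calculation yields exactly $\tfrac12+\tfrac{1}{2\sqrt2}\approx0.8536\ge0.85$ for each bit, so all claims are verified.
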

The measurements in \Cref{eq:21povm} are the measurements in $X$ and computational basis.
The $X$ basis measurement is performed to decode the first classical bit while the computational basis measurement is performed to decode the second classical bit.
The $(2,1,0.85)$-QRA coding is visualized as vertices of the square on the $x$-$z$ plane in the Bloch sphere as shown in \Cref{fig:21bloch}.
\begin{prop}[$(3,1,0.79)$-QRA codes~\cite{ambainis2002dense,hayashi20064}]
    Consider the map
    \begin{equation}
        (x_1,x_2,x_3)\mapsto\rho_{x_1,x_2,x_3}:=\frac{1}{2}\left(I+\frac{1}{\sqrt{3}}((-1)^{x_1}X+(-1)^{x_2}Y+(-1)^{x_3}Z)\right).
        \label{eq:31qrac}
    \end{equation}
    For every pair of $(x_1,x_2,x_3)$, $\rho_{x_1,x_2,x_3}$ is a pure state and can be written in the form $\rho_{x_1,x_2,x_3}=\ket{\psi(x_1,x_2,x_3)}\bra{\psi(x_1,x_2,x_3)}$ where
    \begin{align*}
        \ket{\psi(0,0,0)}&=\cos{\tilde{\theta}}\ket{0}+e^{\frac{\pi i}{4}}\sin{\tilde{\theta}}\ket{1},\\
        \ket{\psi(0,0,1)}&=\sin{\tilde{\theta}}\ket{0}+e^{\frac{\pi i}{4}}\cos{\tilde{\theta}}\ket{1},\\
        \ket{\psi(0,1,0)}&=\cos{\tilde{\theta}}\ket{0}+e^{\frac{-\pi i}{4}}\sin{\tilde{\theta}}\ket{1},\\
        \ket{\psi(0,1,1)}&=\sin{\tilde{\theta}}\ket{0}+e^{\frac{-\pi i}{4}}\cos{\tilde{\theta}}\ket{1},\\
        \ket{\psi(1,0,0)}&=\cos{\tilde{\theta}}\ket{0}+e^{\frac{3\pi i}{4}}\sin{\tilde{\theta}}\ket{1},\\
        \ket{\psi(1,0,1)}&=\sin{\tilde{\theta}}\ket{0}+e^{\frac{3\pi i}{4}}\cos{\tilde{\theta}}\ket{1},\\
        \ket{\psi(1,1,0)}&=\cos{\tilde{\theta}}\ket{0}+e^{\frac{-3\pi i}{4}}\sin{\tilde{\theta}}\ket{1},\\
        \ket{\psi(1,1,1)}&=\sin{\tilde{\theta}}\ket{0}+e^{\frac{-3\pi i}{4}}\cos{\tilde{\theta}}\ket{1},
    \end{align*}
    where $\tilde{\theta}$ satisfies the condition $(\cos{\tilde{\theta}})^2=\frac{1}{2}+\frac{1}{2\sqrt{3}}>0.79$.
    Then, this map is a $(3,1,0.79)$-QRA codings with the POVMs (projective measurements, in fact):
    \begin{equation}
        E^1=\{\ket{+}\bra{+},\ket{-}\bra{-}\},E^2=\{\ket{+i}\bra{+i},\ket{-i}\bra{-i}\},E^3=\{\ket{0}\bra{0},\ket{1}\bra{1}\}.
        \label{eq:31povm}
    \end{equation}
    \label{prop:31qrac}
\end{prop}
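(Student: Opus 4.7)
The plan is to establish the three assertions of \Cref{prop:31qrac} in turn: that each $\rho_{x_1,x_2,x_3}$ is a pure state, that it has the tabulated state-vector form, and that the POVMs in \Cref{eq:31povm} decode each encoded bit with probability $\tfrac{1}{2}+\tfrac{1}{2\sqrt{3}}$. Purity is immediate from the single-qubit Bloch representation \Cref{eq:qubit_density}: the Bloch coordinates of $\rho_{x_1,x_2,x_3}$ are $r_k = (-1)^{x_k}/\sqrt{3}$, so $r_x^2+r_y^2+r_z^2 = 3\cdot\tfrac{1}{3} = 1$, which places the state on the Bloch sphere and hence makes it a rank-one projector.

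To verify the explicit ket forms I would carry out the computation for one representative, say $(x_1,x_2,x_3)=(0,0,0)$. With $\ket{\psi(0,0,0)} = \cos\tilde\theta\ket{0} + e^{i\pi/4}\sin\tilde\theta\ket{1}$, the defining identity $\cos^2\tilde\theta = \tfrac{1}{2}+\tfrac{1}{2\sqrt{3}}$ immediately gives $\cos^2\tilde\theta-\sin^2\tilde\theta = 1/\sqrt{3}$ and $\sin 2\tilde\theta = \sqrt{2/3}$; substituting into $\ket{\psi}\bra{\psi}$ reproduces the Bloch vector $(1/\sqrt{3},1/\sqrt{3},1/\sqrt{3})$ required by \Cref{eq:31qrac}. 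The remaining seven cases follow by inspection: swapping $\cos\tilde\theta \leftrightarrow \sin\tilde\theta$ flips $r_z$, while the choice of $\ket{1}$-phase among $\{e^{\pm i\pi/4}, e^{\pm i3\pi/4}\}$ realises the four sign patterns of $(r_x,r_y)$ dictated by $(x_1,x_2)$. Alternatively, one may invoke the Pauli-conjugation symmetry $\rho \mapsto \sigma_k\rho\sigma_k$, which flips the Bloch components orthogonal to $\sigma_k$ and collapses the eight cases to essentially one calculation.

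The decisive step is the decoding probability. Each projector in \Cref{eq:31povm} can be written uniformly as $E^i_{x_i} = \tfrac{1}{2}(I + (-1)^{x_i}\sigma_i)$ with $\sigma_1 = X$, $\sigma_2 = Y$, $\sigma_3 = Z$. Expanding $\mathrm{Tr}(E^i_{x_i}\rho_{x_1,x_2,x_3})$ against \Cref{eq:31qrac} and invoking the Pauli trace identities $\mathrm{Tr}(I) = 2$, $\mathrm{Tr}(\sigma_k) = 0$, and $\mathrm{Tr}(\sigma_k\sigma_\ell) = 2\delta_{k\ell}$, only the $\sigma_i^2$ cross-term survives, yielding
\begin{equation*}
\mathrm{Tr}(E^i_{x_i}\rho_{x_1,x_2,x_3}) \;=\; \tfrac{1}{2} + \frac{(-1)^{x_i}\cdot(-1)^{x_i}}{2\sqrt{3}} \;=\; \tfrac{1}{2} + \tfrac{1}{2\sqrt{3}} \;=\; \cos^2\tilde\theta,
\end{equation*}
for every $i\in\{1,2,3\}$ and every $(x_1,x_2,x_3)$, independently of the two bits other than $x_i$.

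No individual step is genuinely hard once the Bloch-ball parametrisation and the Hilbert--Schmidt orthogonality of the Pauli matrices are in hand; the whole proof is essentially bookkeeping. The only mildly tedious part is the eight-case verification in step (ii), which is why I would lean on the Pauli-conjugation symmetry described above and on the remark that matching two rank-one density matrices reduces to matching their Bloch vectors, avoiding any repeated computation.
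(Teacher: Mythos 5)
Your proposal is correct. Note that the paper itself offers no proof of this proposition — it is stated as a known result imported from the QRAC literature (Ambainis et al., Hayashi et al.) — so there is no in-paper argument to compare against; your write-up is a self-contained verification of the cited facts. All three steps check out: the Bloch norm $r_x^2+r_y^2+r_z^2=1$ does give purity; the representative computation for $(0,0,0)$ together with the observation that $r_x - i r_y = \sin(2\tilde\theta)\,e^{-i\phi}$ and $r_z=\cos^2\tilde\theta-\sin^2\tilde\theta$ correctly accounts for all eight kets via the phase $\phi\in\{\pm\pi/4,\pm3\pi/4\}$ and the $\cos\leftrightarrow\sin$ swap; and the identity $E^i_{x_i}=\tfrac12\bigl(I+(-1)^{x_i}\sigma_i\bigr)$ combined with Hilbert--Schmidt orthogonality of the Paulis gives the decoding probability $\tfrac12+\tfrac1{2\sqrt3}$ independently of the other two bits, exactly as claimed. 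One cosmetic remark: $\tfrac12+\tfrac1{2\sqrt3}\approx 0.7887$, so the paper's inequality ``$>0.79$'' in the statement is itself a slight overstatement (it should read $\approx 0.79$, as elsewhere in the paper); your derivation is unaffected.
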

The measurements in \Cref{eq:31povm} are the measurements in $X$, $Y$, and computational basis.
Each measurement is performed to decode the corresponding classical bit.
The $(2,1,0.85)$-QRA coding is visualized as vertices of the cube inscribed in the Bloch sphere as shown in \Cref{fig:31bloch}.
\begin{remark}
    To see \Cref{eq:21qrac,eq:31qrac}, we can formulate $(1,1,1)$-QRA codes like the following equation:
    \begin{equation}
        x_1\mapsto\rho_{x_1}:=\frac{1}{2}(I+(-1)^{x_1}Z).
        \label{eq:11qrac}
    \end{equation}
    Then, each encoded state is a pure state like $\rho_0=\ket{0}\bra{0}$ and $\rho_1=\ket{1}\bra{1}$.
    The corresponding POVM is just a computational basis measurement $\{\ket{0}\bra{0},\ket{1}\bra{1}\}$.
\end{remark}
The $(1,1,1)$-QRA coding is visualized as the bipolar points of the Bloch sphere as shown in \Cref{fig:11bloch}.
\begin{figure}[tb]
    \begin{tabular}{ccc}
        \begin{minipage}[b]{0.3\hsize}
            \centering
            \includegraphics[height=3.8cm]{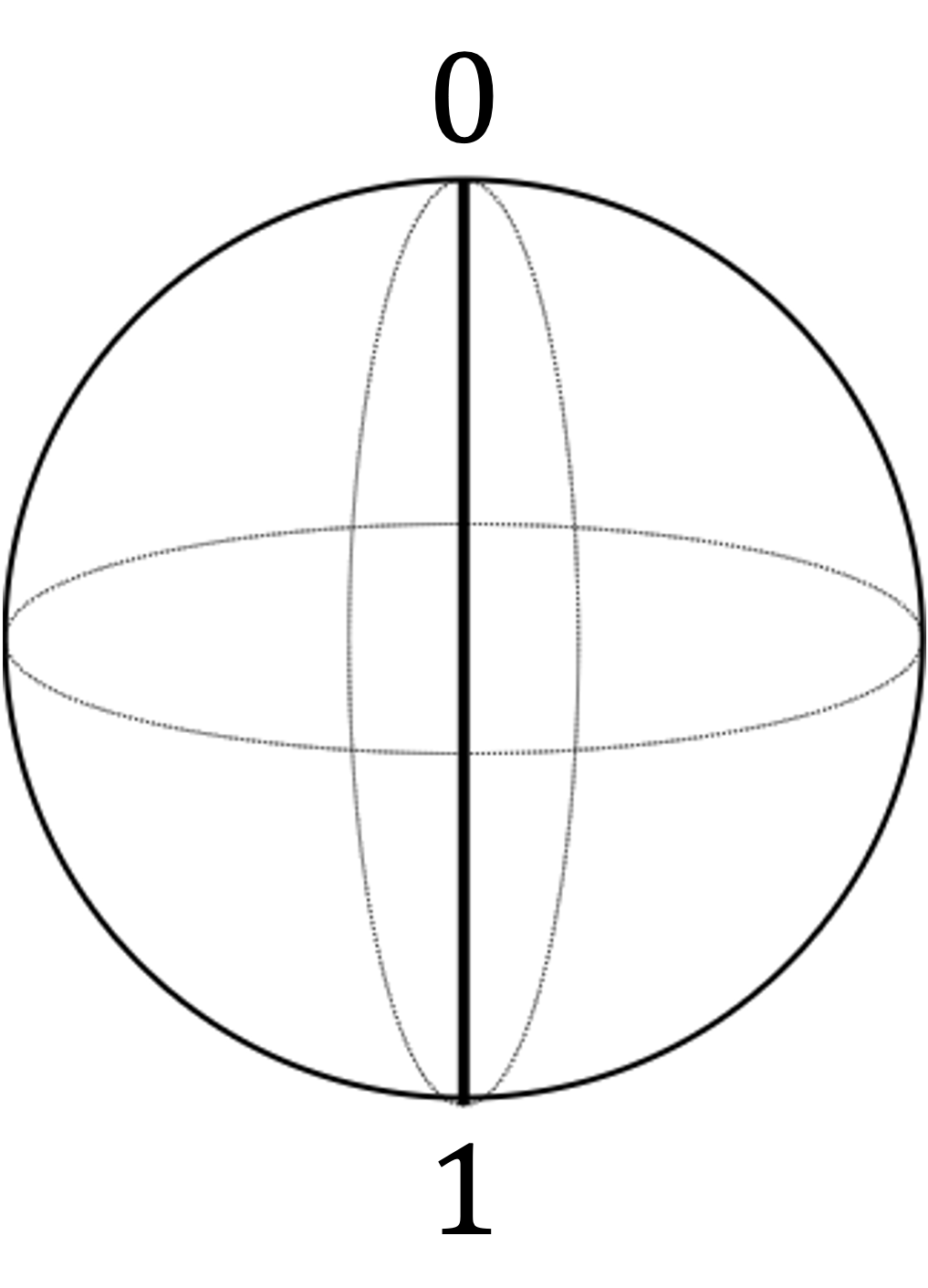}
            \subcaption{$(1,1,1)$-QRA coding}       
            \label{fig:11bloch}
        \end{minipage} &
        \begin{minipage}[b]{0.3\hsize}
            \centering
            \includegraphics[height=3.8cm]{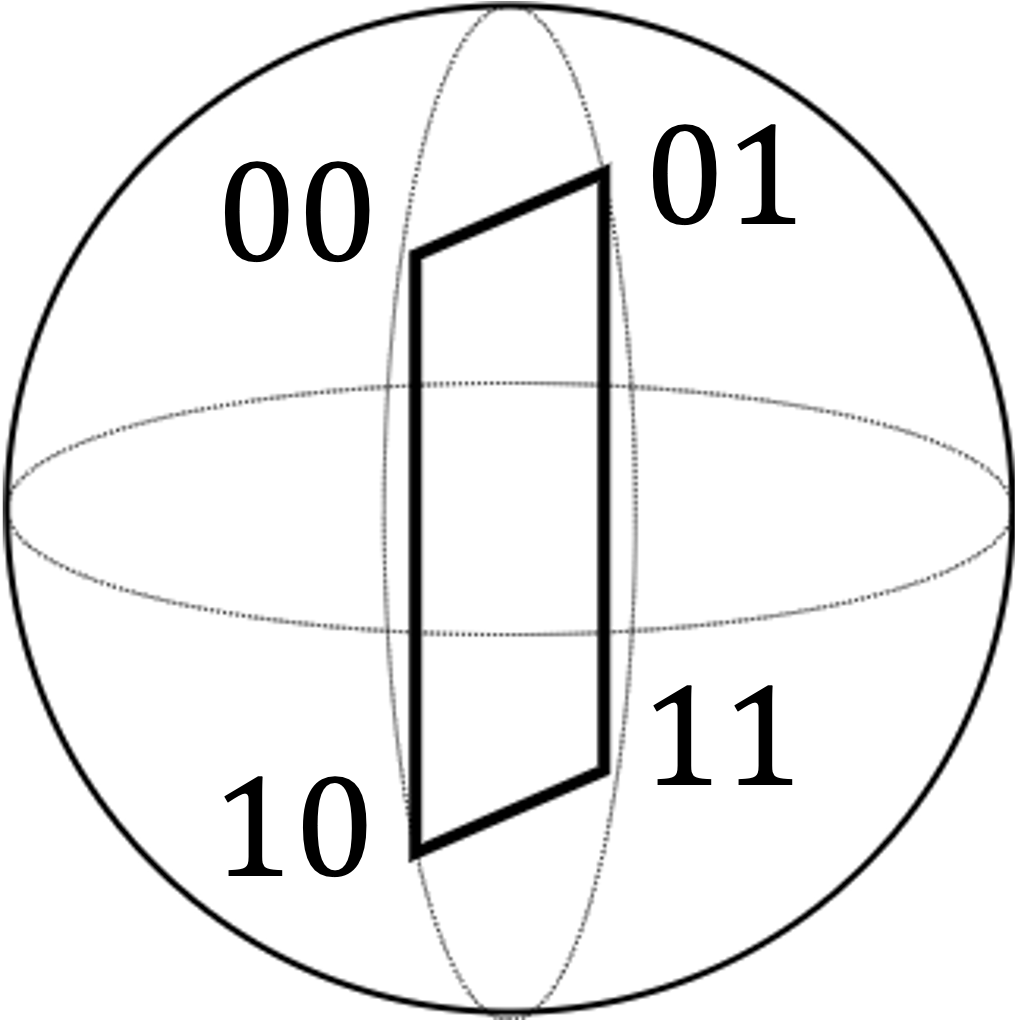}
            \subcaption{$(2,1,0.85)$-QRA coding}
            \label{fig:21bloch}
        \end{minipage} &
        \begin{minipage}[b]{0.3\hsize}
            \centering
            \includegraphics[height=3.8cm]{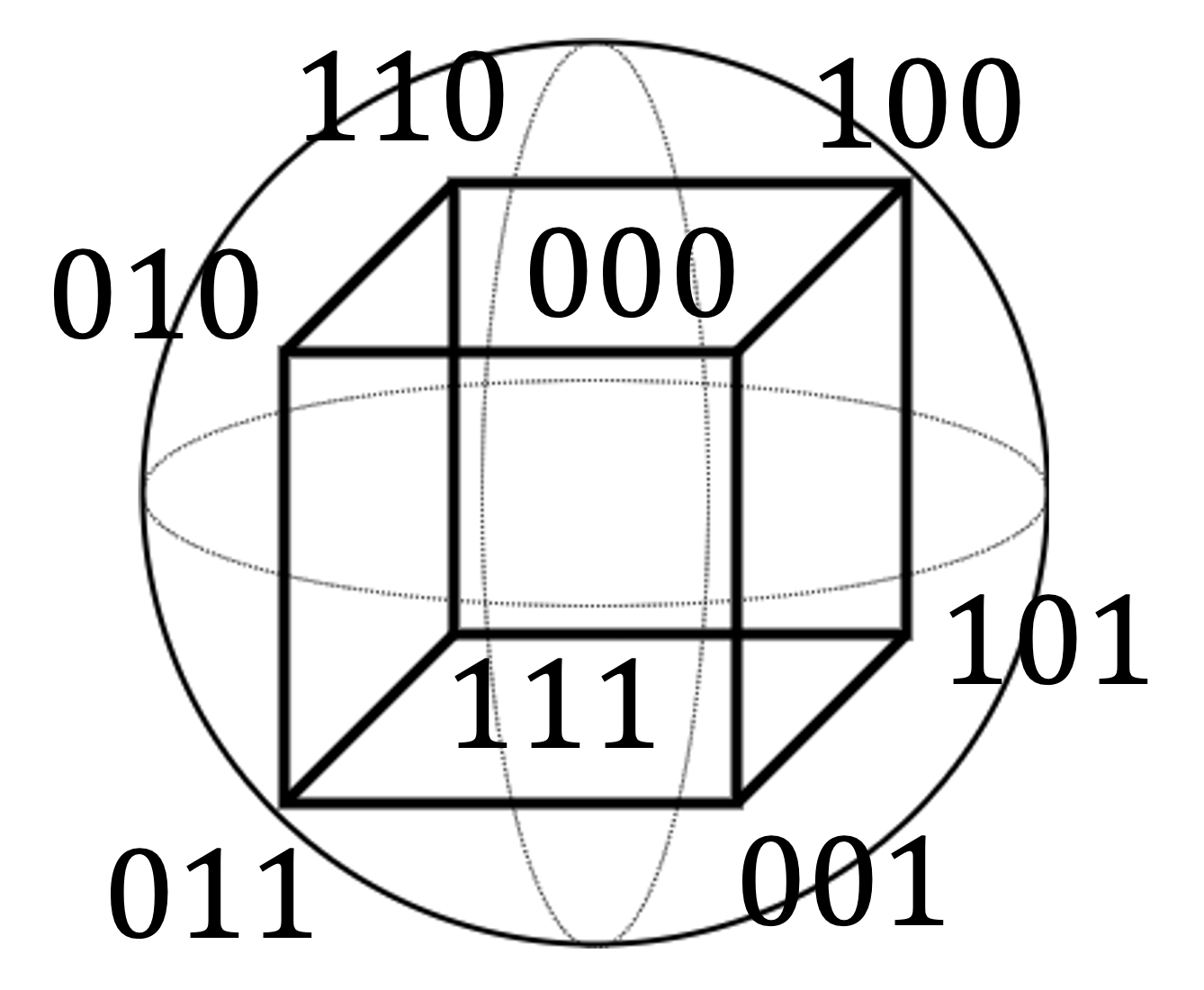}
            \subcaption{$(3,1,0.79)$-QRA coding}
            \label{fig:31bloch}
        \end{minipage}
    \end{tabular}
    \caption{The $(n,1,p)$-QRA coding in Bloch sphere representation}
    \label{fig:qrac_bloch}
\end{figure}

For $n\geq 4$, it is proved that there exists no $(n,1,p)$-QRA coding where $p>\frac{1}{2}$~\cite{hayashi20064}.
For $(n,m,p)$-QRA coding where $m\geq 2$, it is proved that $(n,m,p)$-QRA coding does not exists if $n\geq 4^m$~\cite{hayashi20064}.
It is also proved that $(n,m,p)$-QRA coding exists if and only if $n\leq 4^m-1$~\cite{iwama2007unbounded}.
From here, we sometimes write $(n,m,p)$-QRA codes as $(n,m)$-QRAC for simplicity.
About the success probability of decoding $p$, it is known that $\frac{1}{2}+\frac{1}{\sqrt{2}}\approx0.85$ for $(2,1)$-QRAC and $\frac{1}{2}+\frac{1}{2\sqrt{3}}\approx 0.79$ for $(3,1)$-QRAC are optimal, for example, based on the bound $\frac{1}{2}+\frac{1}{2\sqrt{n}}$ for $(n,1)$-QRAC~\cite{ambainis2002dense}.
For $(m,n)$-QRAC, there is a bound for $p$ known as Nayak bound~\cite{nayak1999optimal}:
\begin{equation}
    m>(1-H(p))n
\end{equation}
where $H(p)$ is the binary entropy function:
\begin{equation}
    H(p):=-p\log_2p-(1-p)\log_2(1-p).
\end{equation}
Recently, the better bound for $p$ than Nayak's one for some pairs of $(m,n)$~\cite{manvcinska2022geometry} is obtained as:
\begin{equation}
    p\leq\frac{1}{2}+\frac{1}{2}\sqrt{\frac{2^{m-1}}{n}}.
    \label{eq:better_bound}
\end{equation}
The above bound gives $p=\frac{1}{2}+\frac{1}{2\sqrt{2}}$ for $(4,2)$-QRAC and $p=\frac{1}{2}+\frac{1}{2\sqrt{3}}$ for $(6,2)$-QRAC.
These bound for $p$ implies the impossibility to make better $(4,2)$- or $(6,2)$-QRAC than just dividing $4$ or $6$ classical bits to the pair of $2$ or $3$ bits and using two $(2,1)$- or $(3,1)$-QRACs for each group of $2$ or $3$ bits independently.
The bound \Cref{eq:better_bound} also implies the optimality of $p=\frac{1}{2}+\frac{1}{\sqrt{6}}$ for $(3,2)$-QRAC~\cite{imamichi2018constructions} obtained by numerical calculation.
We will see this $(3,2)$-QRAC more in detail in \Cref{sec:32qrao} because it is used to extend the original QRAO~\cite{fuller2021approximate} using explained in the next \Cref{sec:qrao}.

\subsection{Quantum Relaxation Based Optimization Algorithms}
\label{sec:qrao}
The following explanation is based on the QRAO paper~\cite{fuller2021approximate}.
We explain the quantum-relaxation based optimization algorithm by using the MaxCut problem formulated as
\begin{equation}
    \max_{\{-1,+1\}^{|V(G)|}}\frac{1}{2}\sum_{e_{i,j}\in E(G)}(1-x_ix_j).
\end{equation}
In the typical quantum-classical hybrid approach using variational methods such as VQE~\cite{peruzzo2014variational} or QAOA~\cite{farhi2014quantum}, each classical binary variable $x_i$ is mapped to $i$-th qubit using the Pauli $Z$ operator. Then the MaxCut problem is reduced to the problem to find the maximum eigenstate of the Hamiltonian:
\begin{equation}
    H=\frac{1}{2}\sum_{e_{i,j}\in E(G)}(I-Z_iZ_j).
    \label{eq:normal_hamiltonian}
\end{equation}
Variational methods such as VQE are used to search for the maximum eigenstate of $H$.
Because $H$ is a diagonal Hamiltonian, it contains the classical states (without superposition or entanglement) as the maximal eigenstates so that the found state in the algorithm can be interpreted directly as the classical solution to the MaxCut problem by just measuring it in the computational basis.

On the other hand, in the quantum-relaxation based optimization algorithms such as QRAO~\cite{fuller2021approximate}, multiple classical bits are encoded into a smaller number of qubits using QRACs explained in \Cref{sec:qrac}.
For example, if we use $(3,1)$-QRAC in \Cref{eq:31qrac}, three classical binary variables $x_1$, $x_2$, and $x_3$ are mapped to a single qubit using the Pauli $X$, $Y$, and $Z$ operators respectively.
Compared with QAOA or VQE, QRAO has the constant-factor space complexity advantage.
We will focus on the QRAO using $(3,1)$-QRAC from here in this section.
The goal is, as well as the typical methods, to reduce the MaxCut problem to the procedure to explore the maximum eigenstate of the Hamiltonian called \textit{relaxed Hamiltonian} $H_{relax}$.
To construct a relaxed Hamiltonian, we make the mapping from classical binary variables into qubits.
First we perform a coloring of the instance graph $G$ by using, for example, LDF (large-degree-first) method~\cite{welsh1967upper} whose time complexity is $O(|V(G)|\log{|V(G)|}+\mathrm{deg}(G)|V(G)|)$ where $\mathrm{deg}(G)$ is the maximum degree of the graph $G$.
After performing the LDF algorithm, the vertices are partitioned into the set $\{V_c\}$ associated with the color $c\in C$.
Let $\mathrm{color}(i)$ be the color of the $i$-th vertex $v_i$.
Then, the following condition holds:
\begin{equation}
    e_{i,j}\in E(G)\implies\mathrm{color}(i)\neq\mathrm{color}(j).
\end{equation}
Next, we associate $\left\lceil\frac{|V_c|}{3}\right\rceil$ qubits for each color $c\in C$.
Now up to three vertices are assigned to a single qubit.
We greedily order these three vertices and assign the Pauli operators $X$, $Y$, and $Z$ respectively.
If we use the $(2,1)$-QRAC, then we associate $\left\lceil\frac{|V_c|}{2}\right\rceil$ qubits for each color and assign the Pauli $X$ and $Z$ for the up to two vertices assigned to the same single qubit instead.
Finally, we obtained a relaxed Hamiltonian instead of the \textit{normal} Hamiltonian in \Cref{eq:normal_hamiltonian} as below:
\begin{equation}
    H_{relax}=\frac{1}{2}\sum_{e_{i,j}\in E(G)}(I-3P_iP_j),
    \label{eq:relax_hamiltonian}
\end{equation}
where $P_i$ is the Pauli operator associated with the vertex $v_i$.
Actually, the typical algorithm using the normal Hamiltonian in \Cref{eq:normal_hamiltonian} can be assumed to be the quantum-relaxation based optimization using $(1,1)$-QRAC defined in \Cref{eq:11qrac}.
We explore the maximum eigenstate of $H_{relax}$ by using variational methods such as VQE.
The relaxed Hamiltonian $H_{relax}$ is no longer diagonal and it contains the non-classical states (with superposition and entanglement) as the maximal eigenstates.
It means that the found eigenstate for the relaxed Hamiltonian cannot be associated with the classical solution directly.
Because of the construction of the Hamiltonian, the found state should be a quantum state that corresponds to the relaxed solution to the MaxCut problem.
A relaxed solution means the solution of the MaxCut problem without the constraint that the solution must be a binary vector.
We denote the found eigenstate in quantum-relaxation based optimization algorithm as $\rho_{relax}$ and called it \textit{relaxed state}.
To retrieve the classical solution for the MaxCut problem, we perform quantum state rounding algorithms.
There are two types of rounding algorithms proposed by Fuller et al.~\cite{fuller2021approximate}.

The first rounding algorithm is \textit{Pauli rounding} which decodes the encoded three classical bits in each qubit by using the POVM defined in \Cref{eq:31povm}.
More precisely, we perform three POVM $E^1,E^2,E^3$ ($X$,$Y$,$Z$ basis measurement) for all qubits with enough shots and calculate the expectation of the $\mathrm{Tr}[M(v_i)\rho_{relax}]$ denoted by $\mathrm{est}_i$ for all vertices $v_i\in V(G)$ where $M$ is an assignment from vertex to Pauli operator. The Pauli operator $X$, $Y$, and $Z$ corresponds to the observables of the POVMs $E_1$, $E_2$, and $E_3$.
After that, we decode the corresponding classical binary value according to $\mathrm{sign}(\mathrm{est}_i)$.
This procedure is equivalent to just measuring the $j$-th qubit with enough shots, taking the majority of the measurement result, and setting it to the rounded value of the corresponding classical bit.

Unfortunately, if the relaxed state is very entangled and cannot be written in the form $\rho_1\otimes\rho_2\otimes\cdots\otimes\rho_n$, there is no guarantee that the Pauli rounding works well because the correlation among qubits is not considered in the Pauli rounding algorithm.
By using the second rounding algorithm, \textit{magic state rounding}, we can avoid the above problem and can obtain the approximation ratio bound for the MaxCut problem.
The idea of the magic state rounding algorithm is to decode three classical variables at once from a single qubit.
Consider the single qubit magic state:
\begin{equation}
    \mu^{\pm}:=\frac{1}{2}\left(I\pm\frac{1}{\sqrt{3}}(X+Y+Z)\right),
\end{equation}
and set
\begin{align}
    \mu^{\pm}_1&:=\mu^{\pm},\\
    \mu^{\pm}_2&:=X\mu^{\pm}X=\frac{1}{2}\left(I\pm\frac{1}{\sqrt{3}}(X-Y-Z)\right),\\
    \mu^{\pm}_3&:=Y\mu^{\pm}Y=\frac{1}{2}\left(I\pm\frac{1}{\sqrt{3}}(-X+Y-Z)\right),\\
    \mu^{\pm}_4&:=Z\mu^{\pm}Z=\frac{1}{2}\left(I\pm\frac{1}{\sqrt{3}}(-X-Y+Z)\right).
\end{align}
In the magic state rounding algorithm, one of the measurement basis $\{\mu^+_i,\mu^-_i\}$ is selected from $i\in[4]$ for each qubit.
After choosing the bases for all qubits, then a relaxed state $\rho_{relax}$ is measured on those bases.
Three classical binary variables are decoded according to the measurement outcome for each qubit.
\Cref{fig:magic} shows the intuition of the magic state rounding algorithm.
Each measurement $\mu^{\pm}_i$ decodes one of the pair of three bits located at opposite angles on the cube (e.g. $000$ or $111$ in the case of $\mu^{\pm}_1$).
By using this simultaneous decoding of the encoded three bits, the magic state rounding algorithm extracts the solution of the MaxCut for every iteration.
The magic state rounding algorithm repeats this procedure enough times and outputs the best solution.
\begin{figure}[tb]
    \begin{tabular}{cccc}
        \begin{minipage}[t]{0.225\hsize}
            \centering
            \includegraphics[height=2.8cm]{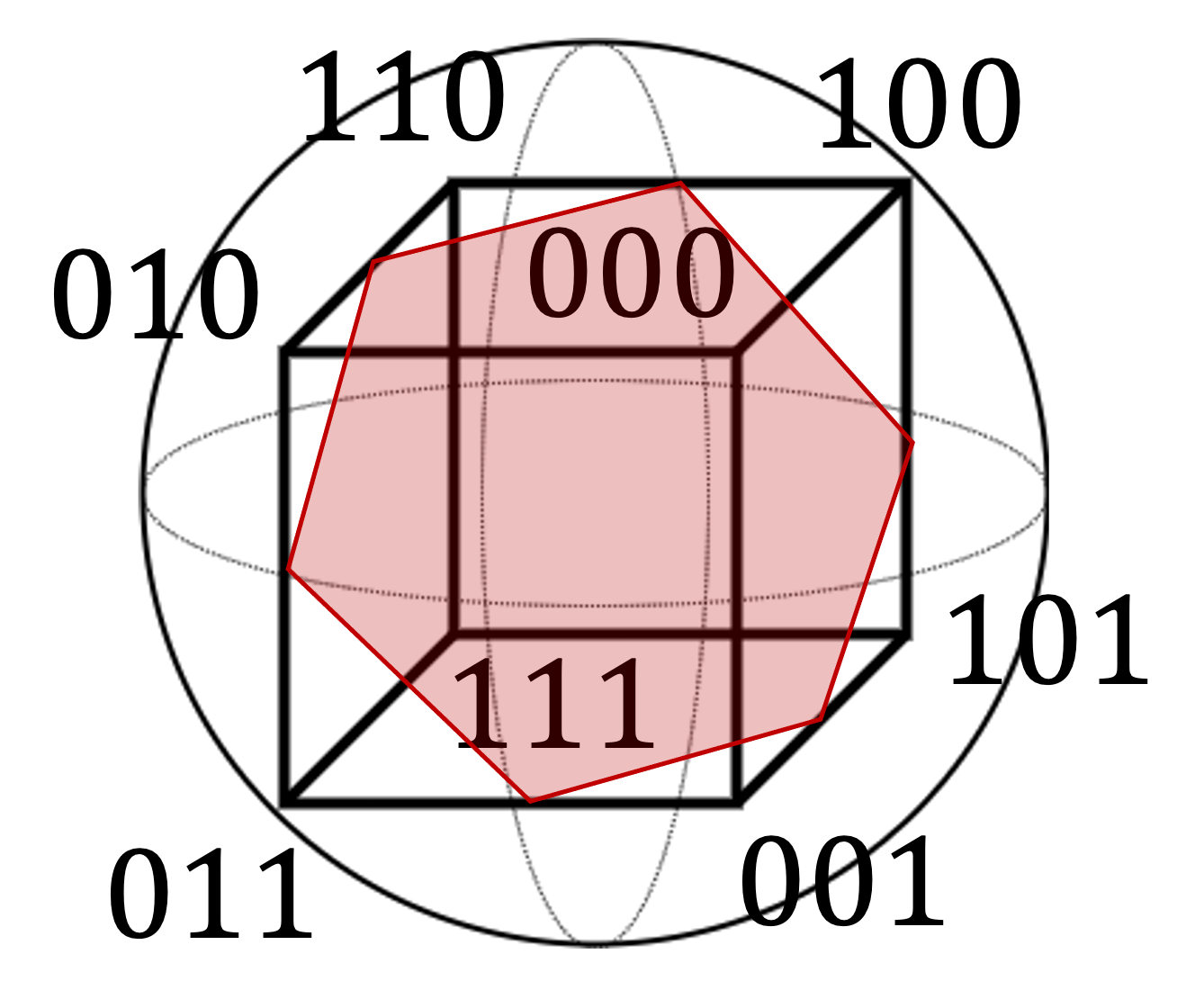}
            \subcaption{$\mu^{\pm}_1$}       
        \end{minipage} &
        \begin{minipage}[t]{0.225\hsize}
            \centering
            \includegraphics[height=2.8cm]{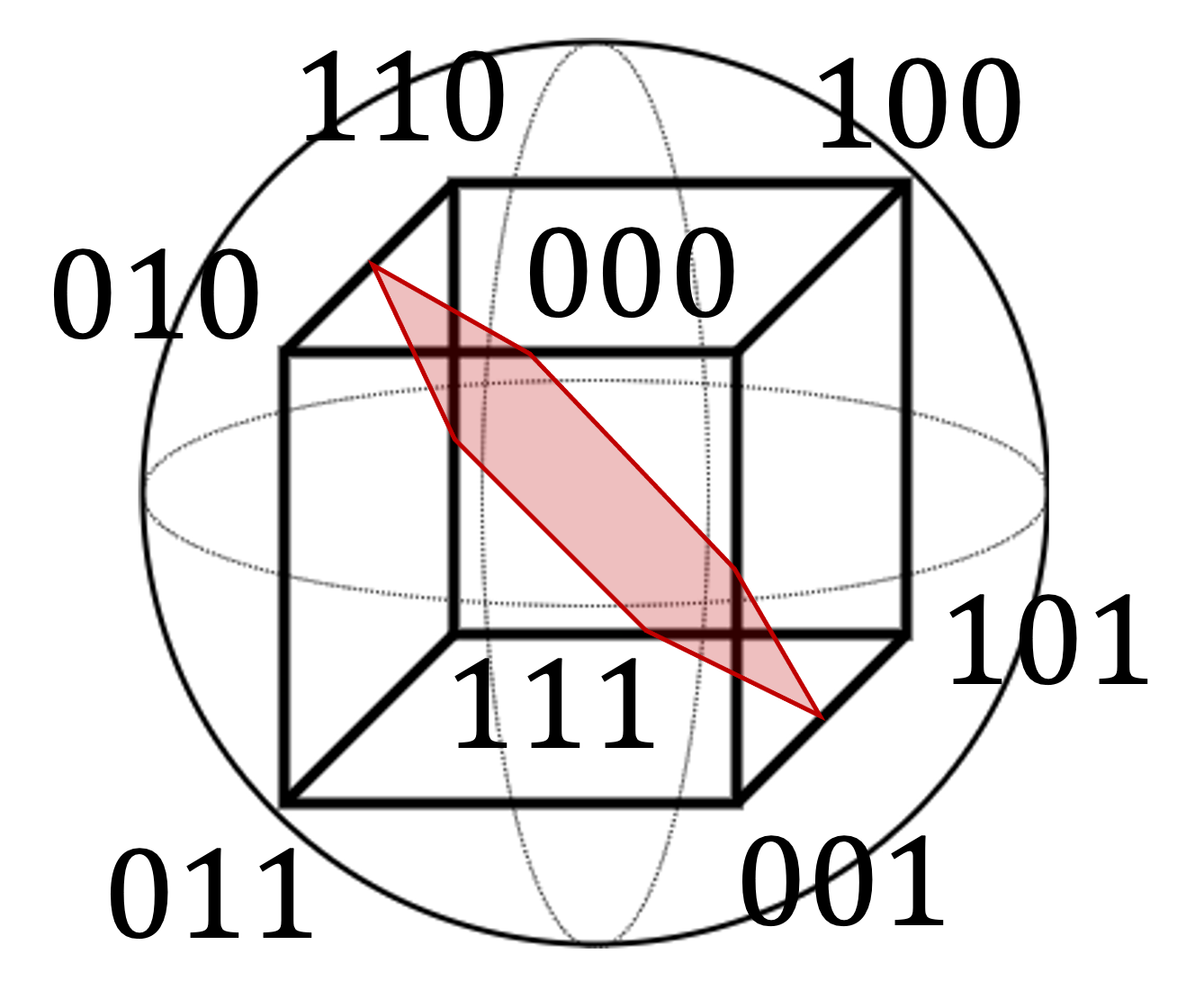}
            \subcaption{$\mu^{\pm}_2$}
        \end{minipage} &
        \begin{minipage}[t]{0.225\hsize}
            \centering
            \includegraphics[height=2.8cm]{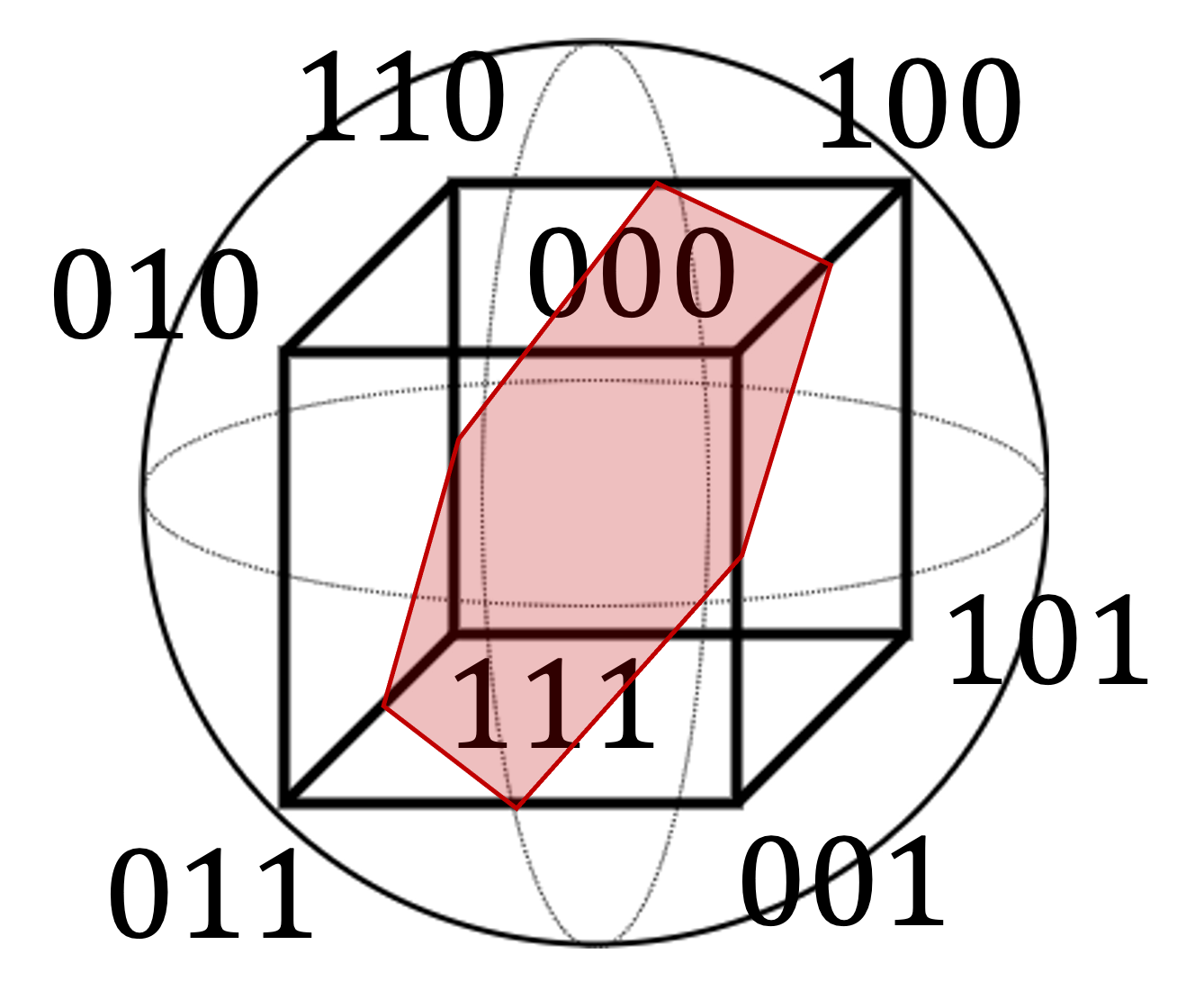}
            \subcaption{$\mu^{\pm}_3$}       
        \end{minipage} &
        \begin{minipage}[t]{0.225\hsize}
            \centering
            \includegraphics[height=2.8cm]{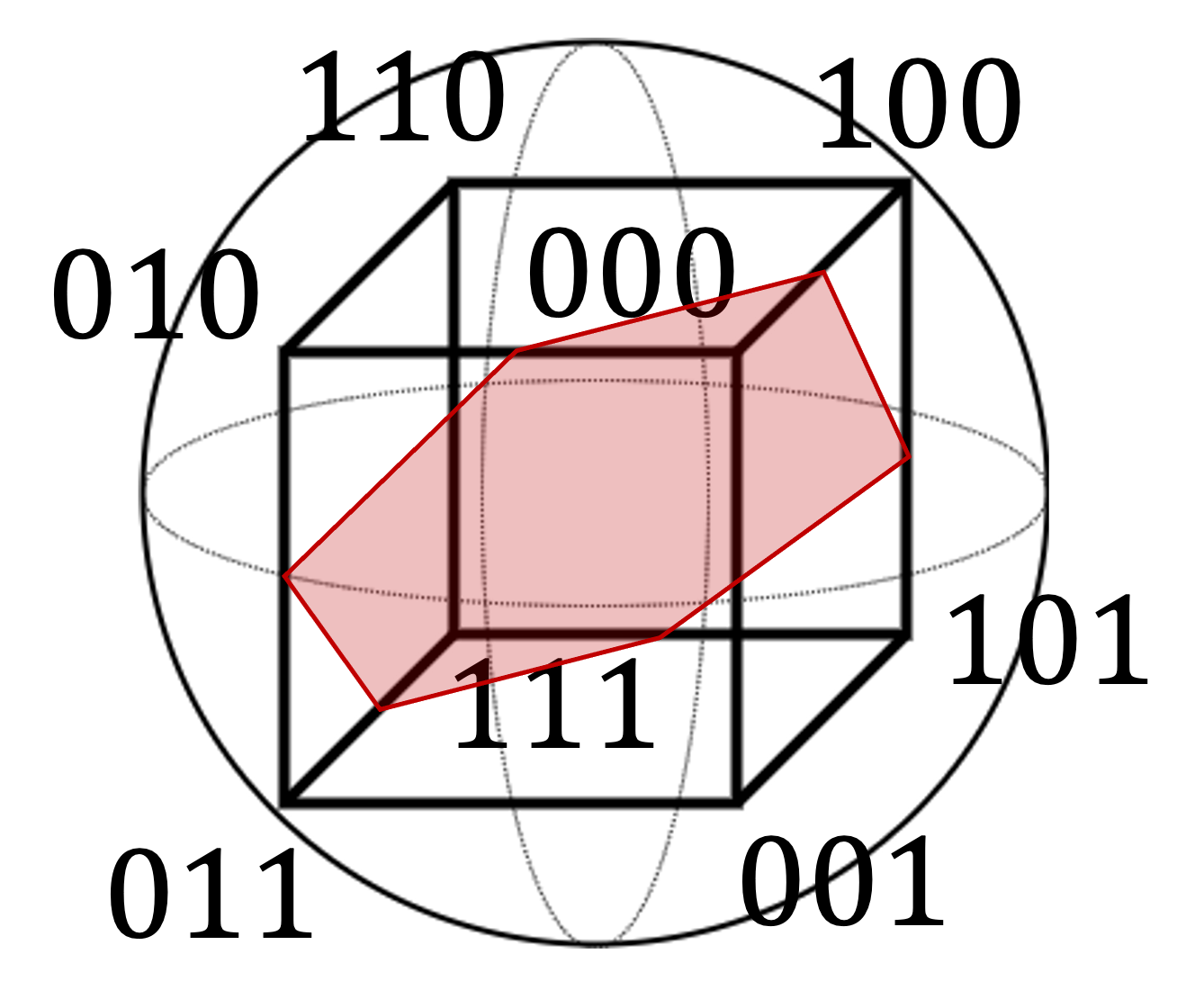}
            \subcaption{$\mu^{\pm}_4$}
        \end{minipage}
    \end{tabular}
    \caption{The intuition of the quantum measurements performed in magic state rounding algorithm}
    \label{fig:magic}
\end{figure}

The approximation ratio bound of QRAO using the magic state rounding algorithm for the MaxCut problem is obtained with the premise that the found relaxed state $\rho_{relax}$ has larger energy than the state associated with the optimum solution, i.e. $\mathrm{Tr}[H_{relax}\rho_{relax}]\geq\mathrm{Tr}[H_{relax}\rho_{opt}]=OPT$
where $\rho_{opt}$ is the quantum state which encodes the optimum solution using $(3,1)$-QRAC and $OPT$ is the optimum value of the instance.
\begin{thm}[\cite{fuller2021approximate}]
    Given access to an oracle $\mathcal{O}_{relax}$ which prepares $\rho_{relax}$ satisfying $\mathrm{Tr}[H_{relax}\rho_{relax}]\geq OPT$, the magic state rounding algorithm solves the MaxCut problem with expected approximation ratio $\mathbb{E}[\gamma]\geq\frac{5}{9}\approx 0.555$.
    \label{thm:31bound}
\end{thm}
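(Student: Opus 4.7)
The plan is to compute the expected cut value edge-by-edge and then compare the total to $OPT$ via a single aggregate rearrangement rather than any edge-wise inequality. First, the graph coloring step of QRAO guarantees that the endpoints of every edge $(v_i,v_j)\in E(G)$ sit on distinct qubits, so the random basis choices and the projective measurements on those two qubits are statistically independent, and the expected contribution of the edge depends only on the two-qubit reduced state $\rho_{ij}$ of $\rho_{relax}$.

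The key ingredient is an averaging identity for the four magic-state bases. For each measurement outcome $(a,s)\in[4]\times\{+,-\}$, let $\chi^P_{a,s}\in\{\pm 1\}$ denote the decoded spin assigned to a vertex with Pauli operator $P\in\{X,Y,Z\}$; by construction $\chi^P_{a,s}$ is the sign of the $P$ component of $\mu^s_a$. Using the explicit signed-Pauli forms of $\mu^{\pm}_1,\ldots,\mu^{\pm}_4$ given in the excerpt, one verifies
\begin{equation*}
\frac{1}{4}\sum_{a=1}^{4}\sum_{s=\pm}\chi^P_{a,s}\,\mu^s_a \;=\; \frac{P}{\sqrt{3}}, \qquad P\in\{X,Y,Z\},
\end{equation*}
because summing the four sign patterns $(+,+,+),(+,-,-),(-,+,-),(-,-,+)$ isolates the component matching $P$ and cancels the other two. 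Combined with the independence across the two qubits incident to an edge, this yields $\mathbb{E}[\chi^{P_i}\chi^{P_j}]=\tfrac{1}{3}\mathrm{Tr}[P_iP_j\rho_{relax}]=:\tfrac{1}{3}t_{ij}$, so the expected indicator that the edge is cut equals $\tfrac{1}{2}(1-\tfrac{1}{3}t_{ij})=\tfrac{1}{6}(3-t_{ij})$.

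Summing over edges, the expected cut value is $\tfrac{1}{6}\bigl(3|E|-\sum_{(i,j)\in E}t_{ij}\bigr)$, while the premise $\mathrm{Tr}[H_{relax}\rho_{relax}]\ge OPT$ rearranges to $\sum_{(i,j)\in E}t_{ij}\le(|E|-2\,OPT)/3$. Substituting gives
\begin{equation*}
\mathbb{E}[\mathrm{cut}] \;\ge\; \frac{1}{18}\bigl(9|E|-(|E|-2\,OPT)\bigr) \;=\; \frac{1}{9}(4|E|+OPT),
\end{equation*}
and the trivial bound $OPT\le|E|$ converts this into $\mathbb{E}[\mathrm{cut}]\ge\tfrac{5}{9}OPT$, which, after dividing by $OPT$, is the claimed $\mathbb{E}[\gamma]\ge 5/9$.

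The main obstacle I expect is verifying the averaging identity: it requires tabulating the $24$ signed Pauli contributions and checking that the off-axis components cancel simultaneously for all three choices of $P$, which is where the tetrahedral symmetry of the four magic-state bases does the real work. Once it is in hand, the rest of the proof is a linear rearrangement plus the elementary $OPT\le|E|$. It is worth emphasizing that the per-edge ratio $(3-t_{ij})/(3(1-3t_{ij}))$ is \emph{not} uniformly at least $5/9$ (it tends to $1/3$ as $t_{ij}\to -1$, and is negative when $t_{ij}>1/3$), so the argument genuinely depends on aggregating via the relaxation bound together with the slack $|E|-OPT\ge 0$, rather than on any edge-wise comparison.
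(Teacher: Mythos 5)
Your proof is correct and is in substance the same argument as the paper's (which it states for the $(3,2)$-QRAC analogue, Theorem 10): your averaging identity $\frac{1}{4}\sum_{a,s}\chi^P_{a,s}\mu^s_a=\frac{P}{\sqrt{3}}$ is exactly the statement that the expected rounding channel satisfies $\Phi(P)=\frac{1}{3}P$, so your edge-wise computation of $\mathbb{E}[\chi^{P_i}\chi^{P_j}]=\frac{1}{9}\mathrm{Tr}[3P_iP_j\rho_{relax}]$ is the self-adjointness step in dual form. The remaining aggregation via the relaxation premise and the bound $OPT\leq|E(G)|$ matches the paper's final rearrangement, so no gaps.
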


We can also prove the approximation ratio bound for the case using $(2,1)$-QRAC (and, of course, the case using $(1,1)$-QRAC).
The measurements used in the magic state rounding algorithm when using $(2,1)$-QRAC are defined like the following:
\begin{equation}
    \xi^{\pm}_1:=\frac{1}{2}\left(I\pm\frac{1}{\sqrt{2}}(X+Z)\right),\xi^{\pm}_2:=\frac{1}{2}\left(I\pm\frac{1}{\sqrt{2}}(X-Z)\right).
\end{equation}
One of the above two measurements is chosen for each qubit.
The expected approximation ratio for the QRAO using $(2,1)$-QRAC is proved to be $\frac{5}{8}=0.625$.
In the case of using $(1,1)$-QRAC, the approximation ratio is obtained as $1.0$.
However, it is meaningless because the existence of the oracle $\mathcal{O}_{relax}$ in the assumption of the proof implies that the oracle can prepare the optimal solution.
It is obvious that given the optimum solution, the approximation ratio is $1.0$.
\Cref{tab:approx_ratios_revisited} summarized the results of the approximation ratios of various quantum-relaxation based optimizers obtained by Fuller et al.~\cite{fuller2021approximate}.
There is a trade-off between the space compression ratio and the approximation ratio.
We will extend QRAO to the case using $(3,2)$-QRAC and prove the approximation ratio bound in \Cref{sec:32qrao}.
\begin{table}[tb]
    \centering
    \caption{The relationship between the approximation ratio for the maximum cut problem and the space compression ratio of quantum-relaxation based optimization algorithms}
    \begin{tabular}{|l||l|l|}
        \hline
        Algorithm & \begin{tabular}{l}space compression \\ratio \end{tabular} & \begin{tabular}{l}approximation \\ratio \end{tabular} \\ \hline
        $(1,1)$-QRAO~\cite{fuller2021approximate} ($\approx$ QAOA~\cite{farhi2014quantum}) & $1.0$ & $(1.0)$ \\ \hline
        $(2,1)$-QRAO~\cite{fuller2021approximate} & $2.0$ & $0.625$ \\ \hline
        $(3,1)$-QRAO~\cite{fuller2021approximate} & $3.0$ & $0.555$ \\ \hline
    \end{tabular}
    \label{tab:approx_ratios_revisited}
\end{table}

\section{Theoretical Extensions of Quantum Relaxations}
\label{sec:32qrao}
\subsection{$(3,2)$-QRA Coding}
\label{subsec:32qrac}
$(3,2)$-QRA coding is one of the quantum random access codes which encodes three classical bits into two qubits.
The concrete formulation of the $(3,2)$-QRAC is obtained in the numerical calculation~\cite{imamichi2018constructions} like the following:
\begin{thm}
    Consider the map from three bits $(x_1,x_2,x_3)\in\{0,1\}^3$ to a two-qubit quantum state $\rho'_{x_1,x_2,x_3}$ defined by the following equations:
    \begin{itemize}
        \item If $b_1\oplus b_2\oplus b_3=0,$\\
            \begin{equation}
                \rho'_{x_1,x_2,x_3}:=\frac{1}{4}I_1I_2+\frac{1}{4}((-1)^{x_1}Z_1I_2+(-1)^{x_2}I_1Z_2+(-1)^{x_3}Z_1Z_2).
                \label{eq:parity0}
            \end{equation}
        \item Else if $b_1\oplus b_2\oplus b_3=1,$
            \begin{equation}
                \begin{split}
                    \rho'_{x_1,x_2,x_3}:=\frac{1}{4}I_1I_2&+(-1)^{x_1}\left(\frac{1}{12}Z_1I_2+\frac{1}{6}X_1X_2+\frac{1}{6}X_1Z_2\right)\\
                    &+(-1)^{x_2}\left(\frac{1}{6}I_1X_2+\frac{1}{12}I_1Z_2+\frac{1}{6}Y_1Y_2\right)\\
                    &+(-1)^{x_3}\left(\frac{1}{12}Z_1Z_2-\frac{1}{6}X_1I_2-\frac{1}{6}Z_1X_2\right)
                    \label{eq:parity1}
                \end{split}
            \end{equation}
    \end{itemize}
    For every pair of $(x_1,x_2,x_3)$, $\rho'_{x_1,x_2,x_3}$ is a pure state.
    Then, this map is a $(3,2,0.908)$-QRA coding with the POVMs (projective measurements, in fact):
    \begin{align}
            F^1&=\left\{\frac{1}{2}I_1I_2\pm\frac{1}{\sqrt{6}}\left(\frac{1}{2}X_1X_2+\frac{1}{2}X_1Z_2+Z_1I_2\right)\right\},\label{eq:povmf1}\\
            F^2&=\left\{\frac{1}{2}I_1I_2\pm\frac{1}{\sqrt{6}}\left(\frac{1}{2}Y_1Y_2+\frac{1}{2}I_1X_2+I_1Z_2\right)\right\},\\
            F^3&=\left\{\frac{1}{2}I_1I_2+\frac{1}{\sqrt{6}}\left(Z_1Z_2-\frac{1}{2}X_1I_2-\frac{1}{2}Z_1X_2\right)\right\}.\label{eq:povmf3}
    \end{align}
    \label{thm:32qrac}
\end{thm}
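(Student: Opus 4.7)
The plan is to verify the statement by direct computation in the Pauli basis, exploiting the orthonormality $\mathrm{Tr}(PQ) = 4\,\delta_{PQ}$ for two-qubit Pauli strings. Every claim in the theorem---purity of each encoded state, completeness and projectivity of each $F^i$, and the decoding probability lower bound---reduces to a finite sum over Pauli coefficients once $\rho'_{x_1,x_2,x_3}$ and $F^i_\pm$ have been expanded as given.

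First I would verify purity. For $\rho = \sum_P a_P P$ on two qubits, $\mathrm{Tr}(\rho^2) = 4\sum_P a_P^2$. In the parity-$0$ case the four nonzero coefficients are $\pm 1/4$, giving $\mathrm{Tr}(\rho^2) = 4 \cdot 4 \cdot (1/4)^2 = 1$. In the parity-$1$ case each of the three $(-1)^{x_i}$ groups contributes $(1/12)^2 + 2(1/6)^2 = 1/16$, and combined with the identity contribution $(1/4)^2 = 1/16$ one again obtains $\mathrm{Tr}(\rho^2) = 1$, establishing purity in both cases.

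Next I would check that each $F^i$ is a projective POVM. Completeness $F^i_0 + F^i_1 = I$ is immediate from the form $F^i_\pm = \tfrac12 I \pm \tfrac{1}{\sqrt{6}} O_i$. Projectivity is equivalent to $O_i^2 = \tfrac{3}{2} I$: the three squared summands of $O_i$ contribute $\tfrac14 + \tfrac14 + 1 = \tfrac{3}{2}$ times the identity, while each cross-term pair in $O_i$ shares exactly one qubit on which the two Pauli factors differ, so the corresponding anticommutator vanishes. The same pattern disposes of $O_1, O_2, O_3$.

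The heart of the proof is the decoding probability. Using $\mathrm{Tr}(F^i_\pm \rho') = 4\sum_P a_P b_P$, the calculation splits naturally by parity. In the parity-$0$ case $\rho'_{x_1,x_2,x_3}$ is supported on $\{I, Z_1I_2, I_1Z_2, Z_1Z_2\}$, and each $O_i$ contains exactly one of the latter three with coefficient $1$ (the remaining summands of $O_i$ drop out), so a single non-identity inner product of magnitude $\tfrac{1}{\sqrt{6}}$ combines with the identity contribution $\tfrac12$ to produce $\tfrac12 + \tfrac{1}{\sqrt{6}}$ once the sign of $F^i_\pm$ is matched to $(-1)^{x_i}$. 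In the parity-$1$ case the roles are dual: the Pauli support of the $x_i$-sign group of $\rho'$ coincides with the non-identity part of $F^i$, so three aligned terms $\tfrac{1}{12}\cdot\tfrac{1}{\sqrt{6}} + 2\cdot\tfrac{1}{6}\cdot\tfrac{1}{2\sqrt{6}} = \tfrac{1}{4\sqrt{6}}$ add coherently and, after multiplication by $4$, again yield $\tfrac12 + \tfrac{1}{\sqrt{6}}$. The main obstacle will be the sign bookkeeping---confirming that $F^i_+$ corresponds to outcome $x_i = 0$ consistently across all eight codewords and all three bits---but the manifest permutation symmetry among $(x_1,x_2,x_3)$ built into the construction reduces the verification to one representative per parity class, after which the bound $p = \tfrac12 + \tfrac{1}{\sqrt{6}} \approx 0.908$ follows uniformly.
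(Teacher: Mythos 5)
The paper itself offers no proof of this theorem: it imports the construction from the numerical work of Imamichi et al.\ and simply asserts the properties. Your direct verification in the Pauli basis is therefore a legitimate, self-contained route, and most of it is sound: the arithmetic for projectivity is correct (each $O_i$ squares to $\tfrac{3}{2}I$ because the three summands pairwise anticommute --- each pair has exactly one tensor factor where both Paulis are non-identity and distinct --- so $F^i_\pm$ has spectrum $\{0,1\}$ and $F^i_++F^i_-=I$), and the decoding probabilities check out: in the parity-$0$ case only the $Z_1I_2$, $I_1Z_2$, or $Z_1Z_2$ term of $O_i$ overlaps the state, giving $\tfrac12\pm(-1)^{x_i}\tfrac{1}{\sqrt6}$, and in the parity-$1$ case the Pauli support of the $x_i$-group of $\rho'$ is exactly the support of $O_i$ and is disjoint from the other two $F^j$, so $4\left(\tfrac{1}{12\sqrt6}+\tfrac{2}{12\sqrt6}\right)=\tfrac{1}{\sqrt6}$ as you compute, uniformly in the codeword and with the sign convention $F^i_+\leftrightarrow x_i=0$.

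The one genuine gap is the purity step. For a Hermitian operator with $\mathrm{Tr}\,\rho=1$, the condition $\mathrm{Tr}\,\rho^2=1$ does \emph{not} imply that $\rho$ is a pure state unless you separately establish positive semidefiniteness: the spectrum $\left(\tfrac23,\tfrac23,-\tfrac13,0\right)$ has unit trace and unit purity but is not a state. Your computation $4\sum_P a_P^2=1$ is correct in both parity classes, but as written the argument would certify the counterexample above as "pure." For the parity-$0$ states this is harmless, since \eqref{eq:parity0} factors as $\tfrac12(I+(-1)^{x_1}Z)\otimes\tfrac12(I+(-1)^{x_2}Z)=\ket{x_1x_2}\bra{x_1x_2}$ and positivity is manifest. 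For the parity-$1$ states you must do more: either verify idempotence $\rho'^2=\rho'$ (the diagonal Pauli contributions already give the required $\tfrac{3}{16}I$ from $3\cdot\tfrac{1}{144}+6\cdot\tfrac{1}{36}$, but the cross-term anticommutators must be checked to reproduce the linear part), or exhibit an explicit state vector, or diagonalize. With Hermiticity, unit trace, and idempotence in hand, $\rho'$ is a rank-one orthogonal projector and purity follows; without one of these additions the purity claim is unsupported.
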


$(3,2)$-QRAC has two kinds of encoded state form in \Cref{eq:parity0,eq:parity1}, and which to use depends on the parity of the encoded three bits.
It holds that for each parity, four encoded states are orthogonal, i.e. for each $x_1,x_2,x_3\in\{0,1\}^3$ and $x_1',x_2',x_3'\in\{0,1\}^3$ ($(x_1,x_2,x_3)\neq(x_1',x_2',x_3')$) satisfying $x_1\oplus x_2\oplus x_3=x_1'\oplus x_2'\oplus x_3'$,
\begin{equation}
    \left<\psi'(x_1,x_2,x_3)|\psi'(x_1',x_2',x_3')\right>=0.
    \label{eq:orthogonal}
\end{equation}
It implies that if we know the parity of the encoded classical bits in advance, we can decode the encoded three bits by using the 4-outcome quantum measurement.
This characteristic is used when we formulate the rounding algorithm corresponding to the magic state rounding algorithm of the quantum relaxation using $(3,1)$- or $(2,1)$-QRACs.
The POVMs in \Crefrange{eq:povmf1}{eq:povmf3} are used when we'd like to decode the encoded bits one by one (e.g. the Pauli rounding algorithm).
The success probability of the decoding is $\frac{1}{2}+\frac{1}{\sqrt{6}}\approx 0.908$, and it is proved to be optimal by using the bound in \Cref{eq:better_bound}~\cite{manvcinska2022geometry}.
While the space compression ratio of $(3,2)$-QRAC is less than $(3,1)$- or $(2,1)$-QRACs, the success probability of decryption is better than theirs.
We will see in the next section that the same kind of trade-off with the space compression ratio also holds for the approximation ratio of the quantum relaxation using them.

\subsection{Quantum Relaxation Using $(3,2)$-QRAC}
As we see in \Cref{sec:qrao}, we have to extend the problem Hamiltonian $H_{relax}$ for $(3,2)$-QRAC.
Fortunately, we can achieve this step by just substituting the Pauli $X$, $Y$, and $Z$ operators that appeared in $H_{relax}$ by the two-qubit operators $X'$, $Y'$, and $Z'$ respectively and changing the coefficient of the 2-local Pauli operators to $6$.
The definitions of $X'$, $Y'$, and $Z'$ are given in the following equations:
\begin{align}
    X'&:=\frac{1}{\sqrt{6}}\left(\frac{1}{2}X_1X_2+\frac{1}{2}X_1Z_2+Z_1I_2\right),\\
    Y'&:=\frac{1}{\sqrt{6}}\left(\frac{1}{2}I_1X_2+I_1Z_2+\frac{1}{2}Y_1Y_2\right),\\
    Z'&:=\frac{1}{\sqrt{6}}\left(Z_1Z_2-\frac{1}{2}X_1I_2-\frac{1}{2}Z_1X_2\right).
\end{align}
These operators satisfy similar conditions as Pauli operators.
For example,
\begin{equation}
    \mathrm{Tr}[X']=\mathrm{Tr}[Y']=\mathrm{Tr}[Z']=0,
\end{equation}
and for $P'_i,P'_j\in\{X',Y',Z'\}$,
\begin{equation}
    \mathrm{Tr}[P'_i\cdot P'_j]=\delta_{i,j}.
\end{equation}
However, these operators do not satisfy the anti-commutation rule while the Pauli operators satisfy it.
The algorithms are almost the same as QRAO using $(3,1)$-QRAC.
The first step of the algorithm is to color the vertices of the graph.
After that, we make pairs of two qubits and assign a single pair to up to $3$ vertices for which the same color is assigned in graph coloring.
For each vertex assigned to the same pair of two qubits, $X'$, $Y'$, and $Z'
$ is assigned in order instead of the Pauli $X$, $Y$, and $Z$ operators.
Now, all vertices of the graph are associated with one of the operators $X'$, $Y'$, and $Z'$ acting on the same or distinct pair of two qubits.
Intuitively, it can be interpreted as if one qubit in the case of $(3,1)$-QRAC is made redundant by two qubits, and the Pauli operators are replaced with the operators $\{X',Y',Z'\}$.
Then, the problem Hamiltonian of the quantum relaxation using $(3,2)$-QRAC denoted by $H'_{relax}$ is defined like the following:
\begin{equation}
    H'_{relax}:=\frac{1}{2}\sum_{e_{i,j}\in E(G)}(I-6P'_iP'_j)
\end{equation}
where $P'_i$ is one of the operators $\{X',Y',Z'\}$ associated with the vertex $v_i$.
The coefficient of $P'_iP'_j$ is $-1$ because of the relation:
\begin{equation}
    \begin{split}
        \mathrm{Tr}[X'\cdot\rho'(x_1,x_2,x_3)]&=\frac{1}{\sqrt{6}}(-1)^{x_1},\\
        \mathrm{Tr}[Y'\cdot\rho'(x_1,x_2,x_3)]&=\frac{1}{\sqrt{6}}(-1)^{x_2},\\
        \mathrm{Tr}[Z'\cdot\rho'(x_1,x_2,x_3)]&=\frac{1}{\sqrt{6}}(-1)^{x_3}.
    \end{split}
    \label{eq:cor32}
\end{equation}
The next step is to find a maximum eigenstate of the relaxed Hamiltonian $H'_{relax}$ by variational methods such as VQE.
Once we obtained the quantum states corresponding to the relaxed solution to the MaxCut problem, the quantum state rounding algorithm is performed to extract the classical solution.

By using the POVMs in \Crefrange{eq:povmf1}{eq:povmf3}, we can define the rounding algorithm which decodes the encoded bits one by one like the Pauli rounding algorithm of QRAO.
We name the algorithm \textit{individual rounding} and define it like the following.
We perform the POVMs $F^1,F^2,F^3$ for all qubits with enough shots and calculate the expectation of the $\mathrm{Tr}[M'(v_i)\rho'_{relax}]$ denoted by $est'_i$ for all vertices $v_i\in V(G)$ where $M'$ is an assignment from vertex to the operators $\{X',Y',Z'\}$.
This operation can be implemented by making a circuit that maps $\ket{00}$ to $F^i_0$ and $\ket{01}$ to $F^i_1$, taking a conjugate of the circuit, performing the circuit and measuring the second qubit.
After that, we decode the corresponding classical binary value according to $\mathrm{sign}(est'_i)$.
The whole procedure is described in Algorithm 1.
\begin{algorithm}[tb]
    \SetKwInOut{Input}{Input}\SetKwInOut{Output}{Output}\SetAlgoNoLine
    \caption{Individual rounding algorithm for the quantum-relaxation based optimization algorithm using $(3,2)$-QRAC}
    \label{alg:pauli32}
    \Input{An oracle $\mathcal{O'}_{relax}$ which prepares relaxed state $\rho'_{relax}$; Number of measurement shots $S'$; An assignment $M'$ from vertex to the operators $\{X',Y',Z'\}$.}
    \Output{Approximate solution $x\in\{0,1\}^{|V(G)|}$}
    Initialize approximate solution $x=(1,1,...,1)$.\\
    Prepare $\rho'_{relax}$ using $\mathcal{O'}_{relax}$.\\
    Measure each qubit by the POVMs $F^1$ $F^2$, and $F^3$ with $S'$ shots respectively.\\
    Calculate the estimation $\mathrm{est'}_i$ of the value $Tr\left(\rho'_{relax}\cdot M'(v_i)\right)$ for each $v_i\in V(G)$.\\
    \For{$i\in [|V(G)|]$}{
        \eIf{$(\mathrm{est'}_i=0)$}{
            Assign the value to $x_i$ uniformly at random.
        }{
            Assign the value to $x_i$ according to $\mathrm{sign}(\mathrm{est'}_i)$.
        }
    }
    \Return{$x$}
\end{algorithm}

On the other hand, to obtain the approximation ratio bound, we need the other rounding algorithm which decodes the configuration of the graph cut by one-shot measurement like the magic state rounding algorithm of QRAO because the Pauli rounding type algorithms do not take the correlation between qubits into account.
The key to constructing the rounding algorithm for approximation ratio is to design the quantum measurement which decodes encoded three bits for each qubit at once.
We name the algorithm \textit{simultaneous rounding} and define it like the following.
In the case of $(3,1)$- or $(2,1)$-QRACs, decoding was performed for each pair of two bit-inverted relationships by using the magic state basis measurements.
In the case of $(3,2)$-QRAC, the measurement performed is a two-qubits measurement.
There will be up to four different measurement results meaning that up to four different bit patterns can be decoded simultaneously.
As we mentioned in \Cref{subsec:32qrac}, if we know the parity of the encoded bits, then we can decode the encoded three bits by using the 4-outcome quantum measurement defined below up to the parity $0$ or $1$.
\begin{equation}
    \{\rho'_{x_1,x_2,x_3}\}_{x_1\oplus x_2\oplus x_3=0},\ \mathrm{or}\ \{\rho'_{x_1,x_2,x_3}\}_{x_1\oplus x_2\oplus x_3=1}.
    \label{eq:magic32}
\end{equation}
These measurements are rank-1 projective measurements:
\begin{lem}
    The measurements in \Cref{eq:magic32} are rank-1 projective measurements.
    \label{lem:proj32}
\end{lem}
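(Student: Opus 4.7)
The plan is to show that, for each parity $b \in \{0,1\}$, the four operators $\{\rho'_{x_1,x_2,x_3}\}_{x_1\oplus x_2\oplus x_3 = b}$ are mutually orthogonal rank-one projectors on the four-dimensional Hilbert space $\mathbb{C}^2\otimes\mathbb{C}^2$, and therefore form a resolution of the identity. That is exactly the content of being a rank-1 projective measurement: each element is a rank-one projector, and the elements sum to $I$.

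First, I would invoke \Cref{thm:32qrac}, which states that every $\rho'_{x_1,x_2,x_3}$ is a pure state. Writing $\rho'_{x_1,x_2,x_3} = \ket{\psi'(x_1,x_2,x_3)}\bra{\psi'(x_1,x_2,x_3)}$ immediately exhibits each element as a rank-one orthogonal projector, so it is positive semidefinite and satisfies $(\rho'_{x_1,x_2,x_3})^2 = \rho'_{x_1,x_2,x_3}$. This handles the rank-one and projector parts for the individual outcomes; the only remaining obligation is the POVM completeness condition $\sum_{x_1\oplus x_2\oplus x_3 = b} \rho'_{x_1,x_2,x_3} = I$.

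Second, I would appeal to \Cref{eq:orthogonal}, which asserts that within a single parity class, any two distinct encoded states are orthogonal as vectors in $\mathbb{C}^4$. Each parity class in $\{0,1\}^3$ has exactly $2^{3-1}=4$ elements, so the four unit vectors $\ket{\psi'(x_1,x_2,x_3)}$ with $x_1\oplus x_2\oplus x_3 = b$ form an orthonormal basis of $\mathbb{C}^4$. A sum of rank-one projectors onto an orthonormal basis of a $d$-dimensional space is the identity on that space, which gives $\sum_{x_1\oplus x_2\oplus x_3 = b} \rho'_{x_1,x_2,x_3} = I$ for both $b \in \{0,1\}$, completing the verification.

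The main obstacle is essentially only bookkeeping, since the crux reduces to the pure-state property of \Cref{thm:32qrac} and the already-stated orthogonality of \Cref{eq:orthogonal}. As a sanity check, one may instead substitute the explicit Pauli expansions in \Cref{eq:parity0,eq:parity1} and observe that $\sum_{x_1\oplus x_2\oplus x_3 = b}(-1)^{x_i} = 0$ for each $i \in \{1,2,3\}$ within a fixed parity class; this annihilates every non-identity Pauli term in the sum (including the mixed-sign terms arising in \Cref{eq:parity1}), leaving only the four copies of $\tfrac{1}{4} I_1 I_2$, which add to $I$. Either route yields the lemma.
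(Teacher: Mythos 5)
Your proof is correct and follows exactly the route the paper implicitly relies on: the pure-state property from \Cref{thm:32qrac} gives each element as a rank-one projector, and the intra-parity orthogonality in \Cref{eq:orthogonal} makes the four states an orthonormal basis of $\mathbb{C}^4$, so the projectors sum to the identity. The direct Pauli-cancellation check you add at the end is a valid independent verification of completeness, but the main argument already suffices.
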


In the simultaneous rounding algorithm, one of the parity is chosen randomly for each qubit, and one of the corresponding measurements in \Cref{eq:magic32} is performed to the relaxed state $\rho'_{relax}$.
These measurements are performed for all qubits at once and decode one solution to the MaxCut problem.
To implement the above measurement, for example in the case that the parity is $0$, we apply the following unitary operation:
\begin{equation*}
    \ket{00}\bra{\psi'(0,0,0)}+\ket{01}\bra{\psi'(0,1,1)}+\ket{10}\bra{\psi'(1,0,1)}+\ket{11}\bra{\psi'(1,1,0)},
\end{equation*}
measure the state on the computational basis, and decode the bits according to the two bits measurement results like the following:
\begin{equation}
    \begin{split}
        &00\mapsto 000\\
        &01\mapsto 011\\
        &10\mapsto 101\\
        &11\mapsto 110
    \end{split}
    \label{eq:assign32}
\end{equation}
By repeating this procedure sufficient times and taking the best solution, the simultaneous rounding algorithm for the quantum relaxation using $(3,2)$-QRAC finds a classical solution. The whole procedure is described in Algorithm 2.
\begin{algorithm}[tb]
    \SetKwInOut{Input}{Input}\SetKwInOut{Output}{Output}\SetAlgoNoLine
    \caption{Simultaneous rounding algorithm for the quantum-relaxation based optimization algorithm using $(3,2)$-QRAC}
    \label{alg:magic32}
    \Input{An oracle $\mathcal{O'}_{relax}$ which prepares relaxed state $\rho'_{relax}$; Number of measurement shots $S'$.}
    \Output{Approximate solution $x\in\{0,1\}^{|V(G)|}$}
    Initialize approximate solution $x=(1,1,...,1)$.\\
    \For{$s'\in[S']$}{
        Prepare $\rho'_{relax}$ using $\mathcal{O'}_{relax}$.\\
        Randomly and independently choose the parity $p\in\{0,1\}$ for each qubit.\\
        Measure $\rho'_{relax}$ by $\{\rho'_{x_1,x_2,x_3}\}_{x_1\oplus x_2\oplus x_3=p}$ and assign the binary variables according to the measurement result and basis like \Cref{eq:assign32} for each qubit.\\
        Let the resulting solution be $x'$.\\
        Let $\mathrm{cut}(x)$ be the cut value of $x$.\\
        \If{$\mathrm{cut}(x)<\mathrm{cut}(x')$}{
            $x\leftarrow x'$
        }
    }
    \Return{x} 
\end{algorithm}
The quantum relaxation using $(3,2)$-QRAC and the simultaneous rounding algorithm described above yields the expected approximation ratio bound for the MaxCut problem.
In the next section, we prove the approximation ratio to be $0.722$.

\subsection{Proof of the Approximation Ratio}
In the proof of the approximation ratio for the quantum relaxation using $(3,1)$- or $(2,1)$-QRACs, the quantum measurement performed in the magic state rounding algorithm is equivalent in expectation to the single qubit depolarizing channel.
Then, by using the self-adjointness of the single qubit depolarizing channel and the effect of the channel on the Pauli operators the theoretical bound was obtained.
For the case of $(3,2)$-QRAC, we use a similar discussion to prove the approximation ratio bound.
Let the operation "measuring a two qubits state on a basis $\{\rho'_{x_1,x_2,x_3}\}_{x_1\oplus x_2\oplus x_3=p}$ where $p$ is a randomly chosen parity $0$ or $1$" be $\mathcal{M'}$ that is a measurement performed in the simultaneous rounding algorithm in Algorithm 2.
Let us define the expectation of the measurement $\mathcal{M'}$ to be $\Phi'$ like the following:
\begin{align}
    \Phi'(\tau):&=\mathbb{E}[\mathcal{M'}(\tau)]\\
    &=\sum_{p\in\{0,1\}}\sum_{\substack{x_1,x_2,x_3:\\x_1\oplus x_2\oplus x_3=p}}\frac{1}{2}\cdot\mathrm{Tr}[\rho'_{x_1,x_2,x_3}\tau]\cdot\frac{\sqrt{\rho'_{x_1,x_2,x_3}}\tau\sqrt{\rho'_{x_1,x_2,x_3}}^{\dag}}{\mathrm{Tr}[\rho'_{x_1,x_2,x_3}\tau]}\\
    &=\frac{1}{2}\sum_{p\in\{0,1\}}\sum_{\substack{x_1,x_2,x_3:\\x_1\oplus x_2\oplus x_3=p}}\mathrm{Tr}[\rho'_{x_1,x_2,x_3}\tau]\cdot\rho'_{x_1,x_2,x_3}.\label{eq:other_magic32}
\end{align}
The third equation holds from Lemma 7 and the fact that each $\rho'_{x_1,x_2,x_3}$ is Hermitian.
Unfortunately, the above quantum operation $\Phi'$ is not a depolarizing channel.
However, it is enough for us to have some preferable properties of $\Phi'$ to the operators $X'$, $Y'$, and $Z'$, and the self-adjointness of $\Phi'$ as shown in the following lemmas.
\begin{lem}
    It holds that
    \begin{align*}
        \Phi'(I)&=I,\\
        \Phi'(P')&=\frac{2}{3}P'\ (\forall P'\in\{X',Y',Z'\}).
    \end{align*}
    \label{lem:shrink32}
\end{lem}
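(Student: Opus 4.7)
The plan is to first reformulate $\Phi'$ in a symmetric way. Combining the parity-$0$ and parity-$1$ sums in \Cref{eq:other_magic32} into a single sum over all eight bit strings, and using that each $\rho'_{x_1,x_2,x_3}$ is a pure (rank-one) projector so $\mathrm{Tr}[\rho'_x\tau]\rho'_x = \rho'_x\tau\rho'_x$, yields
\begin{equation*}
    \Phi'(\tau) = \frac{1}{2}\sum_{(x_1,x_2,x_3)\in\{0,1\}^3}\rho'_{x_1,x_2,x_3}\,\tau\,\rho'_{x_1,x_2,x_3}.
\end{equation*}
From this expression, $\Phi'(I) = \frac{1}{2}\sum_x\rho'_x$. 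By \Cref{lem:proj32}, the four states in each parity coset form a rank-one projective measurement on $\mathbb{C}^4$, so each coset sums to $I$. The two cosets together give $2I$, and the factor $\tfrac{1}{2}$ yields $\Phi'(I) = I$.

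For $P'\in\{X',Y',Z'\}$ the trace identity \Cref{eq:cor32} rewrites the action as
\begin{equation*}
    \Phi'(P') = \frac{1}{2\sqrt{6}}\sum_{x\in\{0,1\}^3}(-1)^{x_i}\,\rho'_{x_1,x_2,x_3},
\end{equation*}
where $i\in\{1,2,3\}$ is the index associated with $P'$. By the symmetric structure of \Cref{eq:parity0,eq:parity1}, it suffices to treat one case; I would take $P' = X'$ (so $i=1$) and split the outer sum by parity.

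For the parity-$0$ coset, substituting \Cref{eq:parity0}, the $I_1I_2$, $I_1Z_2$ and $Z_1Z_2$ contributions cancel because the character $(-1)^{x_1}$ is orthogonal to the constant, $(-1)^{x_2}$ and $(-1)^{x_3}$ characters when restricted to this four-element coset; only the $(-1)^{x_1}Z_1I_2$ block survives, producing $Z_1I_2$. For the parity-$1$ coset, the same orthogonality argument applied to \Cref{eq:parity1} annihilates every block except the one paired with $(-1)^{x_1}$, leaving $4\bigl(\tfrac{1}{12}Z_1I_2+\tfrac{1}{6}X_1X_2+\tfrac{1}{6}X_1Z_2\bigr) = \tfrac{1}{3}Z_1I_2+\tfrac{2}{3}X_1X_2+\tfrac{2}{3}X_1Z_2$. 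Adding the two parity contributions and multiplying by $\tfrac{1}{2\sqrt{6}}$ gives
\begin{equation*}
    \Phi'(X') = \frac{1}{2\sqrt{6}}\left(\tfrac{4}{3}Z_1I_2 + \tfrac{2}{3}X_1X_2 + \tfrac{2}{3}X_1Z_2\right) = \frac{2}{3}\cdot\frac{1}{\sqrt{6}}\left(Z_1I_2 + \tfrac{1}{2}X_1X_2 + \tfrac{1}{2}X_1Z_2\right) = \frac{2}{3}X'.
\end{equation*}
The $Y'$ and $Z'$ cases proceed identically because in \Cref{eq:parity1} each of the three Pauli triples constituting $X'$, $Y'$, $Z'$ is coupled to exactly one of the characters $(-1)^{x_1},(-1)^{x_2},(-1)^{x_3}$.

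The main technical obstacle is the bookkeeping in the parity-$1$ coset, since its encoding contains nine off-diagonal Pauli terms with carefully chosen signs and coefficients. The conceptual shortcut is character orthogonality on the four-element parity cosets of $\{0,1\}^3$, which forces all but one Pauli block to cancel; the surviving prefactor $\tfrac{2}{3}$ then emerges from the arithmetic of $4$ bit-strings times the $\tfrac{1}{6}$ coefficients of the off-diagonal Paulis, exactly matching the $\tfrac{1}{\sqrt{6}}$ normalization of $P'$.
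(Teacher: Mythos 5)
Your proof is correct, and the key computations check out: the rank-one-projector identity $\rho'_x\tau\rho'_x=\mathrm{Tr}[\rho'_x\tau]\rho'_x$ collapses $\Phi'$ to the symmetric sum, the character sums over each four-element parity coset do annihilate every Pauli block except the one tied to $(-1)^{x_i}$, and the arithmetic $\frac{1}{2\sqrt{6}}(\frac{4}{3}Z_1I_2+\frac{2}{3}X_1X_2+\frac{2}{3}X_1Z_2)=\frac{2}{3}X'$ is right. The paper states this lemma without providing a proof, so there is nothing to compare against, but yours is the natural direct verification and relies only on facts the paper already supplies (\Cref{lem:proj32}, \Cref{eq:cor32}, and the purity of the encoded states).
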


\begin{lem}
    The quantum operation $\Phi'$ is self-adjoint with respect to the inner-product $\left<A,B\right>=\mathrm{Tr}[A\cdot B]$, i.e. for any operator $\phi$ and $\rho$,
    \begin{equation*}
        \mathrm{Tr}[\varphi\cdot\Phi'(\tau)]=\mathrm{Tr}[\Phi'(\varphi)\cdot \tau].
    \end{equation*}
    \label{lem:self_adjoint32}
\end{lem}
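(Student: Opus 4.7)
The plan is to work directly from the already-simplified form of $\Phi'$ given in equation (\ref{eq:other_magic32}), which expresses the channel as a weighted sum of the encoded states themselves:
\[
\Phi'(\tau)=\frac{1}{2}\sum_{p\in\{0,1\}}\sum_{\substack{x_1,x_2,x_3:\\ x_1\oplus x_2\oplus x_3=p}}\mathrm{Tr}[\rho'_{x_1,x_2,x_3}\tau]\cdot\rho'_{x_1,x_2,x_3}.
\]
(Note that this cleanly simplified form already used Lemma~7 to collapse $\sqrt{\rho'}\tau\sqrt{\rho'}^{\dag}/\mathrm{Tr}[\rho'\tau]$ back to $\rho'$, since each $\rho'_{x_1,x_2,x_3}$ is a rank-$1$ projector.) Self-adjointness is then essentially a bookkeeping consequence of the manifest symmetry of this expression.

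Concretely, I would plug this formula into $\mathrm{Tr}[\varphi\cdot\Phi'(\tau)]$, pull the trace inside the finite double sum by linearity, and obtain
\[
\mathrm{Tr}[\varphi\cdot\Phi'(\tau)]=\frac{1}{2}\sum_{p}\sum_{\substack{x_1,x_2,x_3:\\ x_1\oplus x_2\oplus x_3=p}}\mathrm{Tr}[\rho'_{x_1,x_2,x_3}\tau]\cdot\mathrm{Tr}[\varphi\cdot\rho'_{x_1,x_2,x_3}].
\]
Both factors inside the sum are ordinary scalars, so they commute, and the cyclicity of the trace lets me rewrite $\mathrm{Tr}[\varphi\cdot\rho'_{x_1,x_2,x_3}]=\mathrm{Tr}[\rho'_{x_1,x_2,x_3}\cdot\varphi]$ and similarly $\mathrm{Tr}[\rho'_{x_1,x_2,x_3}\tau]=\mathrm{Tr}[\tau\cdot\rho'_{x_1,x_2,x_3}]$. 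Reordering the two scalar factors yields exactly
\[
\frac{1}{2}\sum_{p}\sum_{\substack{x_1,x_2,x_3:\\ x_1\oplus x_2\oplus x_3=p}}\mathrm{Tr}[\rho'_{x_1,x_2,x_3}\varphi]\cdot\mathrm{Tr}[\rho'_{x_1,x_2,x_3}\cdot\tau]=\mathrm{Tr}[\Phi'(\varphi)\cdot\tau],
\]
where in the last step I recognize the inner sum as the expansion of $\Phi'(\varphi)$ paired against $\tau$.

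There is no real obstacle here; the only subtlety worth highlighting in the write-up is that the reduction to this symmetric bilinear form requires the post-measurement states to coincide with $\rho'_{x_1,x_2,x_3}$ themselves, which is precisely what Lemma~7 (rank-$1$ projectivity of the two measurements in (\ref{eq:magic32})) guarantees. Without that rank-$1$ property, the post-measurement state $\sqrt{\rho'}\tau\sqrt{\rho'}^{\dag}/\mathrm{Tr}[\rho'\tau]$ would not be $\rho'$ and the symmetry between $\varphi$ and $\tau$ would be broken. With that in hand, the proof collapses to cyclicity of the trace plus commutativity of scalar multiplication.
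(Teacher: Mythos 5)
Your proof is correct and is essentially the argument the paper intends: the simplified form of $\Phi'$ in \Cref{eq:other_magic32} (already justified in the text via Lemma~7 and Hermiticity of the $\rho'_{x_1,x_2,x_3}$) is manifestly symmetric as a bilinear form, so self-adjointness follows from linearity and cyclicity of the trace. No gaps.
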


By using these facts, we obtained the approximation ratio bound of the quantum-relaxation based optimizer using $(3,2)$-QRAC for the MaxCut problem under the premise that the found relaxed state's energy is larger than the energy of the quantum state associated with the optimum solution:
\begin{thm}
    Consider an oracle $\mathcal{O'}_{relax}$ which prepares the relaxed state $\rho'_{relax}$ for the quantum relaxation using $(3,2)$-QRAC satisfying the condition $\mathrm{Tr}[H'_{relax}\rho'_{relax}]\geq OPT$ where $OPT$ is the optimum value.
    Given access to $\mathcal{O'}_{relax}$, the simultaneous rounding algorithm solves the MaxCut problem with an expected approximation ratio $\mathbb{E}[\gamma]\geq\frac{13}{18}\approx 0.722$.
\end{thm}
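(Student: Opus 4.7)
The plan is to mirror the proof of \Cref{thm:31bound} from Fuller et al., with the single-qubit depolarizing channel replaced by the two-qubit channel $\Phi'$. The expected cut value over one run of the simultaneous rounding algorithm (Algorithm 2) is
\begin{equation*}
\mathbb{E}[\mathrm{cut}] = \sum_{e_{i,j} \in E(G)} \tfrac{1}{2}\bigl(1 - \mathbb{E}[(-1)^{x_i}(-1)^{x_j}]\bigr),
\end{equation*}
so the whole argument reduces to relating $\mathbb{E}[(-1)^{x_i}(-1)^{x_j}]$ to $\mathrm{Tr}[P'_i P'_j \rho'_{relax}]$ and then invoking the oracle assumption $\mathrm{Tr}[H'_{relax} \rho'_{relax}] \geq OPT$.

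For the key step, I would use that the graph-coloring preprocessing guarantees that the endpoints of every edge are assigned to distinct qubit pairs, so the measurement $\mathcal{M}'$ on those two pairs factors as a tensor product of independent copies of $\mathcal{M}'$, and in expectation the global channel is $(\Phi')^{\otimes m}$. From \Cref{eq:cor32} we have $\mathrm{Tr}[\sqrt{6}\, P'_k \cdot \rho'_{x_1 x_2 x_3}] = (-1)^{x_k}$, hence for a single pair in state $\tau$,
\begin{equation*}
\mathbb{E}\bigl[(-1)^{x_k}\bigr] = \mathrm{Tr}\bigl[\sqrt{6}\, P'_k \cdot \Phi'(\tau)\bigr] = \mathrm{Tr}\bigl[\Phi'(\sqrt{6}\, P'_k)\cdot \tau\bigr] = \tfrac{2\sqrt{6}}{3}\,\mathrm{Tr}[P'_k \tau],
\end{equation*}
using \Cref{lem:self_adjoint32} (self-adjointness) and then \Cref{lem:shrink32} (shrinkage). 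Tensoring over the two distinct pairs carrying $i$ and $j$ gives
\begin{equation*}
\mathbb{E}\bigl[(-1)^{x_i}(-1)^{x_j}\bigr] = \Bigl(\tfrac{2\sqrt{6}}{3}\Bigr)^{2} \mathrm{Tr}[P'_i P'_j\, \rho'_{relax}] = \tfrac{8}{3}\, \mathrm{Tr}[P'_i P'_j\, \rho'_{relax}].
\end{equation*}

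Combining yields $\mathbb{E}[\mathrm{cut}] = \tfrac{|E|}{2} - \tfrac{4}{3}\sum_{e_{i,j}} \mathrm{Tr}[P'_i P'_j \rho'_{relax}]$. Since $H'_{relax} = \tfrac{|E|}{2} I - 3\sum_{e_{i,j}} P'_i P'_j$, the oracle assumption rewrites as $\sum \mathrm{Tr}[P'_i P'_j \rho'_{relax}] \leq (|E|/2 - OPT)/3$. Substituting and simplifying produces $\mathbb{E}[\mathrm{cut}] \geq \tfrac{5|E| + 8\, OPT}{18}$, and the trivial upper bound $OPT \leq |E|$ then yields $\mathbb{E}[\mathrm{cut}] \geq \tfrac{13}{18}\, OPT$. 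Because Algorithm 2 returns the best of many independent samples, its expected output is at least as large as this per-sample bound.

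The main obstacle is the tensor-factorization in the second paragraph: one must verify that the pair-by-pair measurement $\mathcal{M}'$, with independently sampled parities, really acts as $(\Phi')^{\otimes m}$ in expectation on the global relaxed state. This is true because the independent parity choices and the resulting four-outcome rank-1 projective measurements (guaranteed by \Cref{lem:proj32}) act on disjoint registers, so \Cref{lem:self_adjoint32} and \Cref{lem:shrink32} lift multiplicatively across pairs. Once that factorization is established, everything else is the algebraic manipulation described above.
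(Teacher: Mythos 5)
Your proposal is correct and follows essentially the same route as the paper's proof: both rely on the expectation channel $\Phi'$, its self-adjointness (\Cref{lem:self_adjoint32}), the shrinkage $\Phi'(P')=\frac{2}{3}P'$ (\Cref{lem:shrink32}) applied once per qubit pair to get the factor $\left(\frac{2}{3}\right)^2=\frac{4}{9}$ on the $2$-local terms, the oracle assumption, and finally the trivial bound $OPT\leq|E(G)|$; your edge-by-edge computation of $\mathbb{E}[(-1)^{x_i}(-1)^{x_j}]$ is just an unpacked version of the paper's manipulation of $H'_{relax}-\frac{|E(G)|}{2}I^{\otimes n}$ as a whole.
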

\begin{proof}
    Let $n$ be the number of qubits involved in the algorithm.
    If $n$ is an odd number, we consider the dummy vertices of the graph to make $n$ even.
    By definition,
    \begin{align*}
        \mathbb{E}[\gamma]&=\mathbb{E}\left[\frac{\mathrm{Tr}[H'_{relax}\mathcal{M'}^{\otimes \frac{n}{2}}(\rho'_{relax})]}{OPT}\right]\\
        &=\frac{1}{OPT}\cdot\left(\frac{|E(G)|}{2}+\mathrm{Tr}\left[\left(H'_{relax}-\frac{|E(G)|}{2}I^{\otimes n}\right)\cdot\Phi'^{\otimes\frac{n}{2}}(\rho'_{relax})\right]\right).
    \end{align*}
    By using the self-adjointness of the operation $\Phi'$ (Lemma 9),
    \begin{equation*}
        \mathbb{E}[\gamma]=\frac{1}{OPT}\cdot\left(\frac{|E(G)|}{2}+\mathrm{Tr}\left[\Phi'^{\otimes\frac{n}{2}}\left(H'_{relax}-\frac{|E(G)|}{2}I^{\otimes n}\right)\cdot\rho'_{relax}\right]\right).
    \end{equation*}
    The operator $H'_{relax}-\frac{|E(G)|}{2}I^{\otimes n}$ is a weighted sum of $P'Q'$ where $P',Q'\in\{X',Y',Z'\}$ and acting on a distinct pair of two qubits. By Lemma 8,
    \begin{equation*}
        \mathbb{E}[\gamma]=\frac{1}{OPT}\cdot\left(\frac{|E(G)|}{2}+\left(\frac{2}{3}\right)^2\cdot\mathrm{Tr}\left[\left(H'_{relax}-\frac{|E(G)|}{2}I^{\otimes n}\right)\cdot\rho'_{relax}\right]\right).
    \end{equation*}
    From the assumption $\mathrm{Tr}[H'_{relax}\rho'_{relax}]\geq OPT$,
    \begin{equation*}
        \mathbb{E}[\gamma]\geq\frac{\frac{|E(G)|}{2}+\frac{4}{9}\cdot\left(OPT-\frac{|E(G)|}{2}\right)}{\frac{|E(G)|}{2}+\left(OPT-\frac{|E(G)|}{2}\right)}.
    \end{equation*}
    Because $0\leq OPT-\frac{|E(G)|}{2}\leq\frac{|E(G)|}{2}$,
    \begin{equation*}
        \mathbb{E}[\gamma]\geq\frac{\frac{|E(G)|}{2}+\frac{4}{9}\cdot\frac{|E(G)|}{2}}{\frac{|E(G)|}{2}+\frac{|E(G)|}{2}}=\frac{1+\frac{4}{9}}{1+1}=\frac{13}{18}\approx 0.722.
    \end{equation*}
\end{proof}

\Cref{tab:approx_ratios_rerevisited} shows our result for the quantum relaxation using $(3,2)$-QRAC (denoted by $(3,2)$-QRAO) and the previous results by Fuller et al. for QRAOs.
Our result is consistent with the trade-off between the bit-to-qubit compression ratio and the approximability of quantum-relaxation based optimizers.
\begin{table}[tb]
    \centering
    \caption{The relationship between the approximation ratio for the maximum cut problem and the space compression ratio of quantum-relaxation based optimization algorithms}
    \begin{tabular}{|l||l|l|}
        \hline
        Algorithm & \begin{tabular}{l}space compression \\ratio \end{tabular} & \begin{tabular}{l}approximation \\ratio \end{tabular} \\ \hline
        $(1,1)$-QRAO~\cite{fuller2021approximate} ($\approx$ QAOA~\cite{farhi2014quantum}) & $1.0$ & $(1.0)$ \\ \hline
        $(2,1)$-QRAO~\cite{fuller2021approximate} & $2.0$ & $0.625$ \\ \hline
        $(3,1)$-QRAO~\cite{fuller2021approximate} & $3.0$ & $0.555$ \\ \hline
        $(3,2)$-QRAO & $1.5$ & $0.722$ (our result) \\
        \hline
    \end{tabular}
    \label{tab:approx_ratios_rerevisited}
\end{table}

\subsection{Space Compression Ratio Preserving Quantum Relaxation}
Though QRAO using $(3,1)$- or $(2,1)$-QRACs have a constant-factor space advantage against typical quantum optimizers, the bit-to-qubit compression ratio becomes lower as the density of the graph instance increases.
This is because there is a constraint that the endpoints of each edge must be associated with different qubits.
For example, if the graph instance is the complete graph, then the number of qubits needed to run QRAO is the same as the number of vertices.
In such cases, the quantum-relaxation based optimizer has no space advantage against standard QAOA and VQE algorithms.
In this section, we propose new types of encoding which encode up to two classical bits into a single qubit by using $(3,1)$-QRAC.
Concretely, we encode the parity of the two bits to the third bit's position in $(3,1)$-QRAC formulation like the following:
\begin{equation}
    (x_1,x_2)\mapsto\tilde{\rho}_{x_1,x_2}:=\frac{1}{2}\left(I+\frac{1}{\sqrt{3}}((-1)^{x_1}X+(-1)^{x_2}Y+(-1)^{x_1\oplus x_2}Z)\right).
    \label{eq:tetra}
\end{equation}
\Cref{fig:tetrahedron} shows the Bloch sphere representation of $(3,1)$-QRAC and the encoding of \Cref{eq:tetra}.
\Cref{eq:tetra} encodes the two classical bits into one of the four vertices of the tetrahedron visualized in \Cref{fig:tetra_bloch}.
These four vertices correspond to the four of eight vertices of the cube in the case of $(3,1)$-QRAC in \Cref{fig:31bloch_revisit}.
\begin{figure}[tb]
    \begin{tabular}{cc}
        \begin{minipage}[b]{0.45\hsize}
            \centering
            \includegraphics[height=5cm]{figures/31qrac.png}
            \subcaption{$(3,1)$-QRAC}       
            \label{fig:31bloch_revisit}
        \end{minipage} &
        \begin{minipage}[b]{0.45\hsize}
            \centering
            \includegraphics[height=5cm]{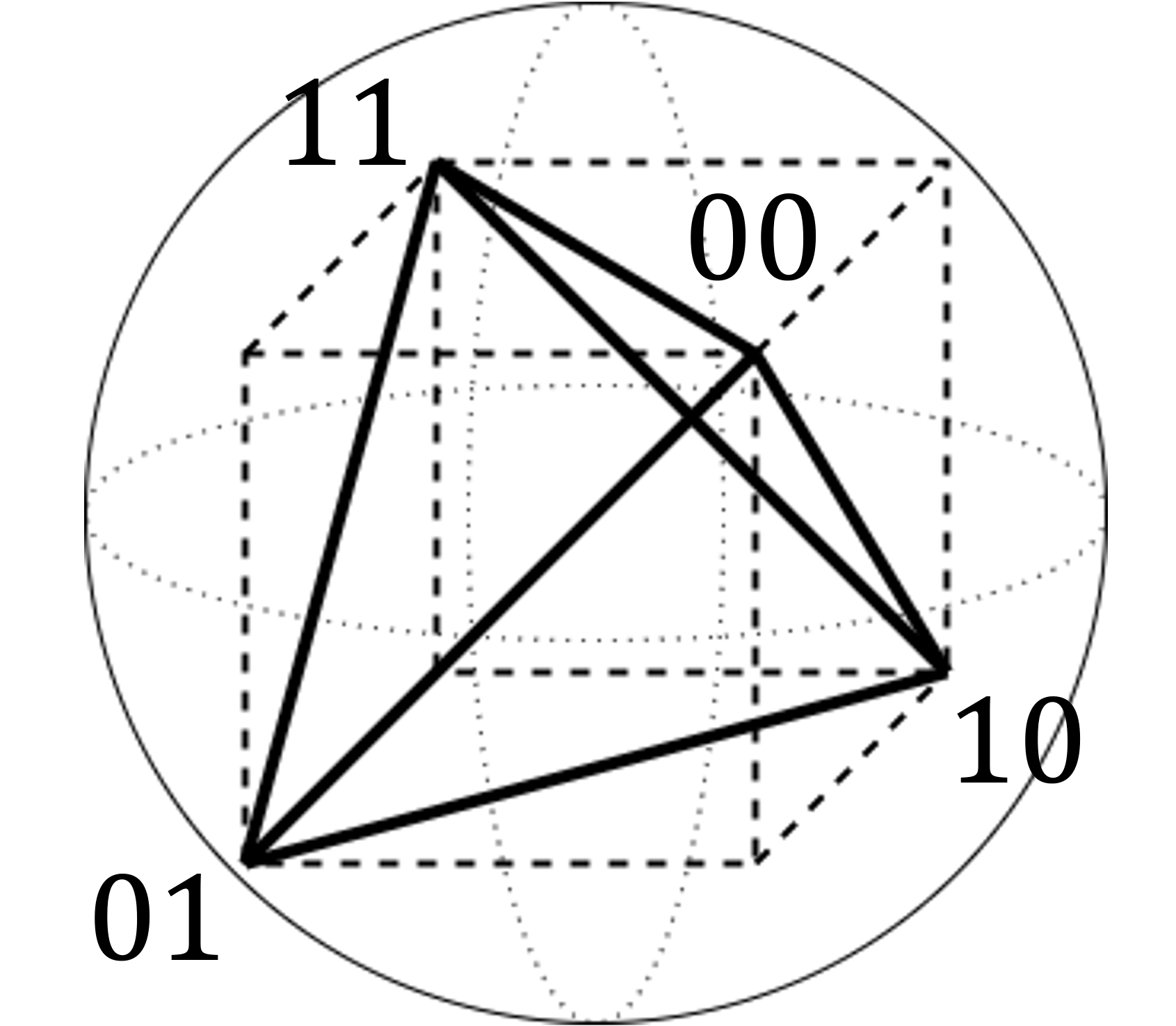}
            \subcaption{Encoding of \Cref{eq:tetra}}
            \label{fig:tetra_bloch}
        \end{minipage}
    \end{tabular}
    \centering
    \caption{The Bloch sphere representation of $(3,1)$-QRAC and the encoding of \Cref{eq:tetra}}
    \label{fig:tetrahedron}
\end{figure}

Let us formulate the quantum relaxation based on the encoding in \Cref{eq:tetra}.
In the QRAO by Fuller et al., a graph coloring algorithm is performed as preprocessing to satisfy the constraint that the endpoints of each edge must be assigned to different qubits.
On the contrary, in our new space compression ratio preserving quantum relaxation, such preprocessing is unnecessary.
We just partition the vertices into $\frac{|V(G)|}{2}$ pairs of two vertices and assign the Pauli $X$ or $Y$ to the two vertices respectively.
Here, w.l.o.g. we assume that $|V(G)|$ is even because otherwise we just add one dummy vertices to make $|V(G)|$ even.
Then, we construct the relaxed Hamiltonian from the instance graph.
The definition of the relaxed Hamiltonian for space compression ratio preserving quantum relaxation is almost the same as the one for $(3,1)$-QRAO.
For each edge $(i,j)\in E(G)$, if the endpoints of it are assigned to different qubits, we encode the edge as the term $P_iP_j$ where $P_i\in\{X,Y\}$ are the Pauli operators associated with the vertex of index $i$.
If the endpoints of the edge are assigned to the same qubit, we use the Pauli $Z$ operator acting on the qubit.
Let $\mathrm{Q\_idx}(i)$ be the index of the qubit associated with the $i$-th vertex.
Formally, the relaxed Hamiltonian for our quantum relaxation $\tilde{H}_{relax}$ is defined like the following:
\begin{equation}
    \tilde{H}_{relax}:=\frac{1}{2}\sum_{e:=(i,j)\in E(G)}(I-O_e)
    \label{eq:tetra_hamiltonian}
\end{equation}
where
\begin{equation}
    O_e:=\begin{cases}3P_iP_j & \mathrm{if}\ \mathrm{Q\_idx}(i)\neq\mathrm{Q\_idx}(j),\\ \sqrt{3}Z_k & \mathrm{if}\ \mathrm{Q\_idx}(i)=\mathrm{Q\_idx}(j)=k.\end{cases}
    \label{eq:oe}
\end{equation}
We note that $P_i$ and $P_j$ in \Cref{eq:oe} are $X$ or $Y$ acting on the different qubits $\mathrm{Q\_idx}(i)$ and $\mathrm{Q\_idx}(j)$.
The coefficient $\sqrt{3}$ in \Cref{eq:oe} comes from the relation:
\begin{equation}
    \mathrm{Tr}[\tilde{\rho}_{x_1,x_2}Z]=\frac{1}{\sqrt{3}}(-1)^{x_1\oplus x_2}.
\end{equation}
As well as the other quantum relaxations, we explore the maximum eigenstate of $\tilde{H}_{relax}$ and find the candidate relaxed state $\tilde{\rho}_{relax}$.
\begin{figure}[tb]
    \begin{tabular}{cccc}
        \begin{minipage}[t]{0.225\hsize}
            \centering
            \includegraphics[height=2.8cm]{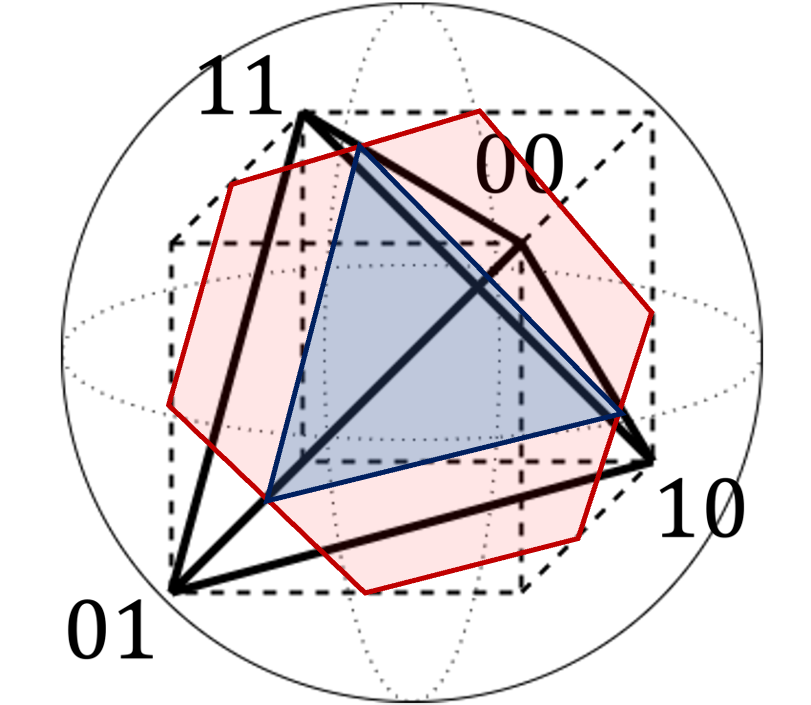}
            \subcaption{$\mu^{\pm}_1$}       
        \end{minipage} &
        \begin{minipage}[t]{0.225\hsize}
            \centering
            \includegraphics[height=2.8cm]{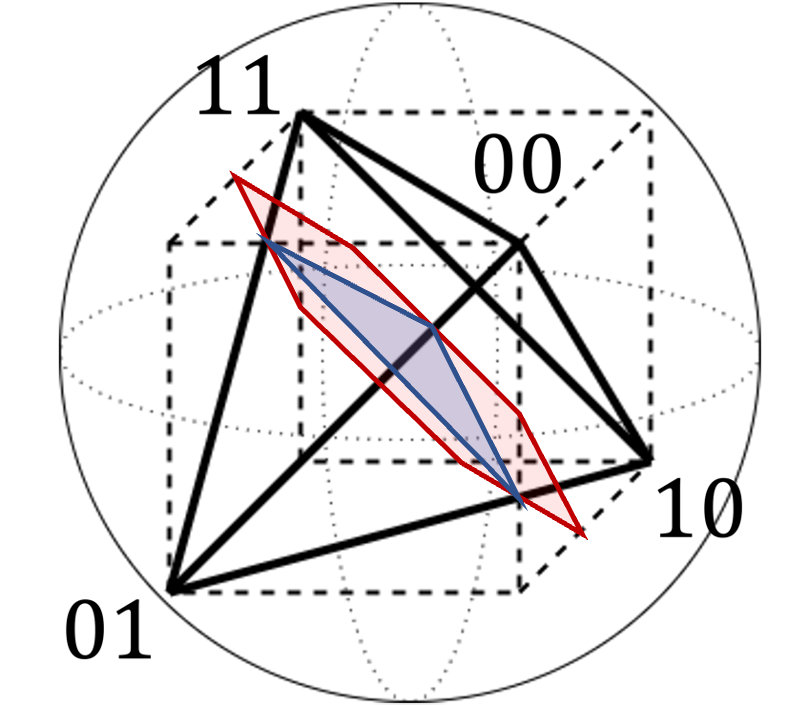}
            \subcaption{$\mu^{\pm}_2$}
        \end{minipage} &
        \begin{minipage}[t]{0.225\hsize}
            \centering
            \includegraphics[height=2.8cm]{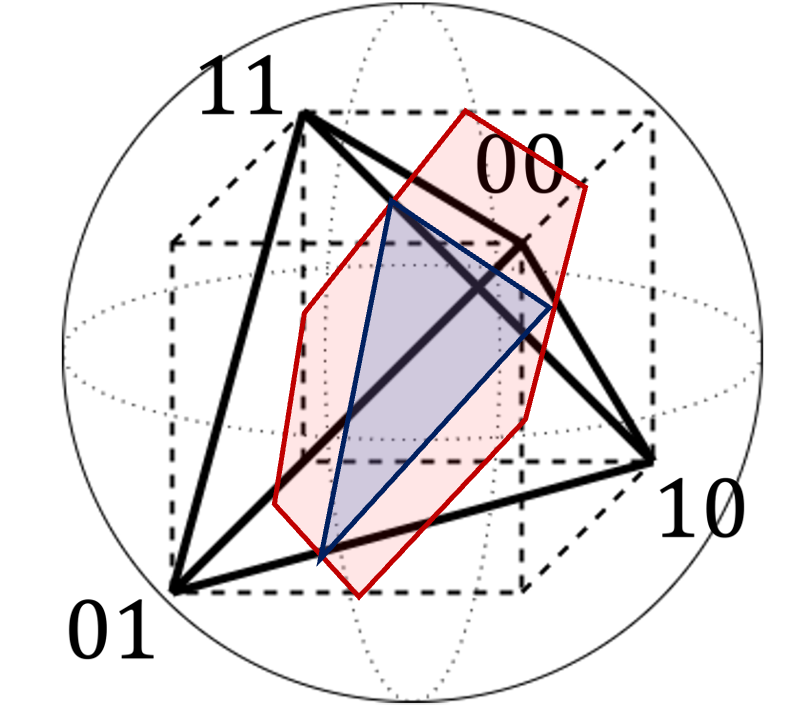}
            \subcaption{$\mu^{\pm}_3$}       
        \end{minipage} &
        \begin{minipage}[t]{0.225\hsize}
            \centering
            \includegraphics[height=2.8cm]{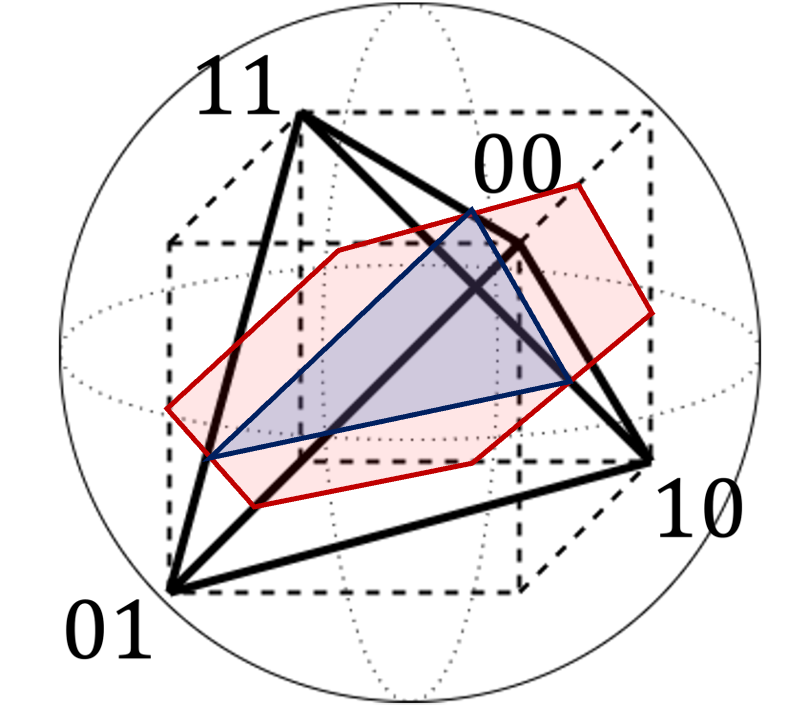}
            \subcaption{$\mu^{\pm}_4$}
        \end{minipage}
    \end{tabular}
    \caption{The intuition of the quantum measurements performed in magic state rounding algorithm performed in space compression ratio preserving quantum relaxation}
    \label{fig:magic_tetra}
\end{figure}

The next step is to define the quantum state rounding algorithm.
The Pauli rounding is the same as that for $(3,1)$-QRAO but disregards the third encoded bit.
The magic state rounding algorithm for our quantum relaxation is also the same as that for $(3,1)$-QRAO but the decoding rule is different.
By the magic bases $\mu^{\pm}_1,\mu^{\pm}_2,\mu^{\pm}_3,\mu^{\pm}_4$, the four encoded patterns $00,01,10,11$ are divided into $2$ groups containing $1$ and $3$ patterns.
The intuition of the magic state measurement is described in \Cref{fig:magic_tetra}.
The red-lined hexagon in the Bloch sphere represents the plane corresponding to the magic state measurement, and the blue-lined triangle represents the intersection of the hexagon and the tetrahedron.
For instance, $\mu^{\pm}_1$ divides the patterns into $\{00\}$ and $\{01,10,11\}$.
If we measure the encoded state $\tilde{\rho}_{0,0}$ in this basis, then the measurement result is always $0$.
If we measure the other three encoded states, then the measurement result is always $1$.
Conversely, if we do not know the encoded two bits, and the measurement outcome of $\mu^{\pm}_1$ is $0$, then the encoded bits are decided to be $00$.
Otherwise, the probabilities that the encoded bits are $01$, $10$, and $11$ are the same $\left(\frac{1}{3}\right)$.
From the above discussions, we define the decoding rule for $\mu^{\pm}_1$ as
\begin{align*}
    &0\mapsto 00,\\
    &1\mapsto 01\ \mathrm{or}\ 10\ \mathrm{or}\ 11\ \mathrm{with\ the\ same\ probabilities}.
\end{align*}
We define the decoding rules in the same way for $\mu^{\pm}_2,\mu^{\pm}_3,\mu^{\pm}_4$.
In the magic state rounding algorithm for our space compression ratio preserving quantum relaxation, we choose one of the four measurement bases $\{\mu_i^{\pm}\}_{i\in[4]}$ and decode the encoded two bits according to the rule defined above.
Then, this rounding algorithm is equivalent to the depolarizing channel of $\lambda=\frac{7}{9}$.
\begin{lem}
    The magic state rounding algorithm for the space compression ratio preserving quantum relaxation described above is equivalent to applying $\Delta_{\frac{7}{9}}$ for all qubits where $\Delta_{\lambda}$ is a single qubit depolarizing channel defined by the equation
    \begin{equation*}
        \Delta_{\lambda}(\rho):=\lambda\cdot\frac{1}{2}I+(1-\lambda)\rho.
    \end{equation*}
\end{lem}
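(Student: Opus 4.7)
The plan is to compute directly the expected decoded density operator $\Phi(\rho)$ produced by one round of the magic state rounding algorithm for a general single-qubit input $\rho$, and to verify that it coincides with $\Delta_{7/9}(\rho)=\tfrac{7}{9}\cdot\tfrac{I}{2}+\tfrac{2}{9}\rho$. The key observation that makes this feasible is that the four tetrahedral encoded states $\tilde{\rho}_{x_1,x_2}$ coincide with the ``$+$'' parts of the four magic bases, i.e.\ $\tilde{\rho}_{0,0}=\mu^+_1$, $\tilde{\rho}_{0,1}=\mu^+_2$, $\tilde{\rho}_{1,0}=\mu^+_3$, $\tilde{\rho}_{1,1}=\mu^+_4$, which I will check by inspection of the Bloch vectors (the tetrahedron vertices $(1,1,1),(1,-1,-1),(-1,1,-1),(-1,-1,1)$). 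Write $\tilde{\rho}^{(i)}:=\mu^+_i$ and $p_i^+:=\mathrm{Tr}[\mu^+_i\rho]$.

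First I would translate the algorithmic description into a formula. Choosing the basis uniformly in $[4]$, obtaining outcome ``$+$'' with probability $p_i^+$, and otherwise decoding uniformly at random to one of the three remaining patterns, the expected output state is
\begin{equation*}
\Phi(\rho)=\frac{1}{4}\sum_{i=1}^{4}\left[p_i^+\,\tilde{\rho}^{(i)}+(1-p_i^+)\cdot\frac{1}{3}\sum_{j\neq i}\tilde{\rho}^{(j)}\right].
\end{equation*}
Next I would use the tetrahedral symmetry identity $\sum_{i=1}^{4}\tilde{\rho}^{(i)}=2I$ (which follows because the four Bloch vectors sum to zero) to rewrite $\sum_{j\neq i}\tilde{\rho}^{(j)}=2I-\tilde{\rho}^{(i)}$, and then note $\sum_i p_i^+=\mathrm{Tr}[2I\rho]=2$. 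These reduce $\Phi(\rho)$ to an affine combination of $I$ and $\sum_i p_i^+\tilde{\rho}^{(i)}$.

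The main (still short) computation is evaluating $S:=\sum_i p_i^+\tilde{\rho}^{(i)}$. Writing $\rho=\tfrac{1}{2}(I+\vec{r}\cdot\vec{\sigma})$ and $\tilde{\rho}^{(i)}=\tfrac{1}{2}(I+\tfrac{1}{\sqrt{3}}\vec{v}_i\cdot\vec{\sigma})$ and expanding, the cross terms linear in $\vec{v}_i$ vanish by $\sum_i\vec{v}_i=0$, while the quadratic term uses the identity $\sum_i(v_i)_a(v_i)_b=4\delta_{ab}$ (which I would verify by a four-term sum for the specific tetrahedral vertices above). This yields $S=I+\tfrac{1}{3}\vec{r}\cdot\vec{\sigma}=\tfrac{2}{3}I+\tfrac{2}{3}\rho$. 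Substituting,
\begin{equation*}
\Phi(\rho)=\frac{1}{6}I+\frac{1}{3}S=\frac{1}{6}I+\frac{2}{9}I+\frac{2}{9}\rho=\frac{7}{18}I+\frac{2}{9}\rho=\Delta_{7/9}(\rho).
\end{equation*}
Since the rounding acts independently on each qubit and the decoding channel is linear, the tensor product $\Phi^{\otimes n/2}$ equals $\Delta_{7/9}^{\otimes n/2}$ on any (possibly entangled) state $\tilde{\rho}_{relax}$, completing the proof.

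The only place where something could go wrong is the identification of the four tetrahedral Bloch vectors and the combinatorial sums $\sum_i\vec{v}_i=0$ and $\sum_i(v_i)_a(v_i)_b=4\delta_{ab}$, which together encode the spherical $2$-design property of the tetrahedron. This is the workhorse of the whole argument and, morally, the reason the channel comes out depolarizing; the rest is bookkeeping of the $\tfrac{1}{4}$ and $\tfrac{1}{3}$ factors coming from uniform basis choice and uniform random decoding on the ``$-$'' outcome.
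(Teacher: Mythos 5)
Your proof is correct. Note that the paper actually states this lemma without providing any proof, so there is nothing to compare against line by line; your computation supplies the missing argument, and it checks out. The identification $\tilde{\rho}_{0,0}=\mu^+_1$, $\tilde{\rho}_{0,1}=\mu^+_2$, $\tilde{\rho}_{1,0}=\mu^+_3$, $\tilde{\rho}_{1,1}=\mu^+_4$ is right, your formula for the expected re-encoded state correctly captures the paper's decoding rule (outcome ``$+$'' maps to pattern $i$, outcome ``$-$'' maps uniformly to the other three), and the two tetrahedral identities $\sum_i\vec{v}_i=0$ and $\sum_i(v_i)_a(v_i)_b=4\delta_{ab}$ hold for the vertices $(1,1,1),(1,-1,-1),(-1,1,-1),(-1,-1,1)$, giving $S=\tfrac{2}{3}I+\tfrac{2}{3}\rho$ and hence $\Phi(\rho)=\tfrac{7}{18}I+\tfrac{2}{9}\rho=\Delta_{7/9}(\rho)$ exactly as claimed. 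Two small remarks: first, in your closing sentence the tensor power should be $\Phi^{\otimes n}$, not $\Phi^{\otimes n/2}$ --- here each measurement acts on a single qubit (the $n/2$ pairing belongs to the $(3,2)$-QRAC construction); second, you implicitly and correctly read ``equivalent to applying $\Delta_{7/9}$'' as equality of the expected re-encoded state, which is the sense in which the lemma is used in the subsequent approximation-ratio theorem. Incidentally, your calculation also exposes a harmless inaccuracy in the paper's prose: measuring $\tilde{\rho}_{0,1}$ in the basis $\{\mu^{\pm}_1\}$ yields outcome ``$-$'' only with probability $\tfrac{2}{3}$, not ``always,'' but this does not affect the lemma since only the decoding rule enters the channel.
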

These equations mean that if we consider the expectation approximation ratio, the approximate solution's value obtained from our quantum relaxation can be written as
\begin{equation*}
    \mathrm{Tr}\left[\tilde{H}_{relax}\Delta_{\frac{7}{9}}^{\otimes n}(\tilde{\rho}_{relax})\right],
\end{equation*}
where $n$ is the number of the qubits involved.
We note that
\begin{equation}
    \begin{split}
        &\Delta_{\lambda}(P)=(1-\lambda)P\ (P\in\{X,Y,Z\}),\\
        &\Delta_{\lambda}(I)=I.
    \end{split}
    \label{eq:shrink}
\end{equation}

Our interest is the approximation ratio bound of our space compression ratio preserving quantum relaxation.
Unfortunately, we didn't obtain the constant expected approximation ratio for it.
Instead, we have the approximation ratio bound dependent on the ratio of the edges whose endpoints are associated with different qubit denoted by $\lambda\in[0,1]$ and the parameter $\epsilon\in\left[0,\frac{1}{2}\right]$ defined by the equation:
\begin{equation}
    OPT=\left(\frac{1}{2}+\epsilon\right)|E(G)|.
    \label{eq:gain}
\end{equation}
We note that $\epsilon$ is called the gain, and the problem to calculate the value $\epsilon$ is called MaxCutGain~\cite{CharikarWirth2004}.
Before proceeding to the proof of the approximation ratio, we prove the following lemma:
\begin{lem}
    Let $\rho$ be a $n$-qubit quantum state and let $P$ be a $k$-local Pauli operator $(0\leq k\leq n)$.
    Then, we have the bound:
    \begin{equation*}
        -1\leq\mathrm{Tr}[P\cdot\rho]\leq 1.
    \end{equation*}
    \label{lem:trace_range}
\end{lem}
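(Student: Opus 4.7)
The plan is to observe that any $k$-local Pauli operator is simultaneously Hermitian and unitary, and therefore has spectrum contained in $\{-1,+1\}$. Once this is established, the bound follows from the standard fact that the expectation value of a Hermitian observable against a density matrix is a convex combination of its eigenvalues.

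First, I would write $P = P_1\otimes P_2\otimes\cdots\otimes P_n$ with each $P_i\in\{I,X,Y,Z\}$ and at most $k$ factors different from $I$. Each single-qubit Pauli satisfies $P_i^{\dagger}=P_i$ and $P_i^2=I$, and these properties are preserved under tensor products, so $P^{\dagger}=P$ and $P^2=I^{\otimes n}$. Hence $P$ is Hermitian with $P^2=I$, which forces every eigenvalue of $P$ to lie in $\{-1,+1\}$.

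Second, I would invoke the spectral decomposition $P=\Pi_{+}-\Pi_{-}$, where $\Pi_{+}$ and $\Pi_{-}$ are the orthogonal projectors onto the $+1$ and $-1$ eigenspaces respectively, and $\Pi_{+}+\Pi_{-}=I^{\otimes n}$. Then
\begin{equation*}
    \mathrm{Tr}[P\cdot\rho] = \mathrm{Tr}[\Pi_{+}\rho] - \mathrm{Tr}[\Pi_{-}\rho].
\end{equation*}
Since $\rho$ is positive semidefinite with unit trace, the quantities $p_{\pm}:=\mathrm{Tr}[\Pi_{\pm}\rho]$ are non-negative and satisfy $p_{+}+p_{-}=1$. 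Therefore $\mathrm{Tr}[P\cdot\rho]=p_{+}-p_{-}\in[-1,+1]$, which is precisely the desired bound.

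There is essentially no serious obstacle here; the lemma is a standard spectral fact. The only point requiring mild care is the interpretation of ``$k$-local'': under the natural convention (tensor product of single-qubit Paulis, non-trivial on at most $k$ sites) the argument above applies uniformly for all $0\leq k\leq n$, including the degenerate case $k=0$ where $P=I^{\otimes n}$ and $\mathrm{Tr}[P\cdot\rho]=1$ trivially.
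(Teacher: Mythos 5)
Your proof is correct and complete: the observation that a tensor product of single-qubit Paulis is Hermitian and unitary, hence has spectrum in $\{-1,+1\}$, combined with the spectral decomposition $P=\Pi_{+}-\Pi_{-}$ and the positivity and unit trace of $\rho$, gives exactly the claimed bound, and your handling of the degenerate case $k=0$ is fine. The paper states this lemma without providing a proof, so there is nothing to compare against; your argument is the standard one and would serve as the omitted proof.
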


\begin{thm}
    Let $\lambda\in[0,1]$ be the ratio of the edges whose endpoints are associated with different qubits.
    Let $\epsilon\in[0,\frac{1}{2}]$ be the gain defined in \Cref{eq:gain}.
    Consider an oracle $\tilde{\mathcal{O}}_{relax}$ which prepares the relaxed state for the space compression ratio preserving quantum relaxation using the encoding in \Cref{eq:tetra} satisfying the condition $\mathrm{Tr}[\tilde{H}_{relax}\tilde{\rho}_{relax}]\geq OPT$ where $OPT$ is the optimum value.
    Given access to $\tilde{\mathcal{O}}_{relax}$, the magic state rounding algorithm defined in this section solves the MaxCut problem with an expected approximation ratio
    \begin{equation*}
        \mathbb{E}[\gamma]\geq\max\left\{\frac{81-14\sqrt{3}+14\sqrt{3}\lambda+8\epsilon}{81+162\epsilon},\frac{27-14\lambda+12\epsilon}{27+54\epsilon}\right\}. 
    \end{equation*}
    \label{thm:tetra_bound}
\end{thm}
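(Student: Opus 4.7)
The plan is to imitate the depolarizing-channel style of proof used in \Cref{thm:31bound} for the magic state rounding of QRAO, pushing the rounding through the self-adjointness of $\Delta_{7/9}$. The depolarizing-channel lemma above identifies the magic state rounding with $\Delta_{7/9}^{\otimes n}$ in expectation, hence
$$\mathbb{E}[\mathrm{cut}(x)] = \mathrm{Tr}\bigl[\tilde{H}_{relax}\,\Delta_{7/9}^{\otimes n}(\tilde{\rho}_{relax})\bigr] = \mathrm{Tr}\bigl[\Delta_{7/9}^{\otimes n}(\tilde{H}_{relax})\,\tilde{\rho}_{relax}\bigr].$$
Using $\Delta_\lambda(I)=I$ and $\Delta_\lambda(P)=(1-\lambda)P$ from \Cref{eq:shrink}, every two-qubit term $3P_iP_j$ in $\tilde{H}_{relax}$ shrinks by $(2/9)^2$ to $\tfrac{4}{27}P_iP_j$, while every single-qubit term $\sqrt{3}Z_k$ shrinks by $2/9$ to $\tfrac{2\sqrt{3}}{9}Z_k$. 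Letting $E_{\mathrm{diff}}$ and $E_{\mathrm{same}}$ denote the partition of edges by whether their endpoints sit on different or the same qubit, and setting
$$A := \sum_{e\in E_{\mathrm{diff}}}\mathrm{Tr}[P_iP_j\,\tilde{\rho}_{relax}], \qquad B := \sum_{e\in E_{\mathrm{same}}}\mathrm{Tr}[Z_k\,\tilde{\rho}_{relax}],$$
this gives the compact formula $\mathbb{E}[\mathrm{cut}]=\tfrac{|E(G)|}{2} - \tfrac{2A}{27} - \tfrac{\sqrt{3}B}{9}$.

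Next, I would translate the two hypotheses into linear constraints on the pair $(A,B)$. Expanding $\mathrm{Tr}[\tilde{H}_{relax}\tilde{\rho}_{relax}]\geq OPT=(\tfrac12+\epsilon)|E(G)|$ via the definition of $\tilde{H}_{relax}$ rearranges to $3A+\sqrt{3}B\leq -2\epsilon|E(G)|$, while \Cref{lem:trace_range} applied edge-by-edge, together with the definition of $\lambda$, gives the box constraints $|A|\leq \lambda|E(G)|$ and $|B|\leq (1-\lambda)|E(G)|$.

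What remains is a two-variable linear program: upper-bound $\tfrac{2A}{27}+\tfrac{\sqrt{3}B}{9}$ so as to lower-bound $\mathbb{E}[\mathrm{cut}]$, then divide by $OPT$. I would carry out two independent eliminations. In the first, use the energy inequality to eliminate $A$, collapsing the target to $(-4\epsilon|E(G)|+7\sqrt{3}B)/81$, then invoke $B\leq(1-\lambda)|E(G)|$; normalising yields the first entry of the maximum. In the second, use the energy inequality instead to eliminate $B$, reducing the target to $(-7A-6\epsilon|E(G)|)/27$, then invoke $-A\leq \lambda|E(G)|$; normalising yields the second entry. Each elimination independently produces a valid lower bound on $\mathbb{E}[\gamma]$, so the theorem follows by taking the maximum of the two.

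The main obstacle is conceptual rather than computational: the coefficients $3$ and $\sqrt{3}$ in $\tilde{H}_{relax}$ scale by different factors under $\Delta_{7/9}^{\otimes n}$ (namely $(2/9)^2$ versus $2/9$), so the shrunken contributions of $A$ and $B$ to the rounded cut no longer sit in the same ratio as in the relaxed energy. This mismatch prevents the clean single-variable cancellation that yields a constant approximation ratio in \Cref{thm:31bound}; one is forced to keep the split between $E_{\mathrm{diff}}$ and $E_{\mathrm{same}}$, and the best statement becomes a trade-off jointly governed by $\lambda$ and $\epsilon$, which packages neatly as the maximum of two linear fractional quantities.
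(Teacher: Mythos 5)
Your proposal is correct and follows essentially the same route as the paper: identify the rounding with $\Delta_{7/9}^{\otimes n}$, push it onto the Hamiltonian by self-adjointness, split the shifted Hamiltonian into its $1$-local and $2$-local parts (your $A,B$ are just a rescaling of the paper's $x,y$), bound each part via \Cref{lem:trace_range} and the edge counts $\lambda|E(G)|$, $(1-\lambda)|E(G)|$, and solve the resulting two-variable linear program by two independent eliminations, each giving one entry of the maximum. Both eliminations check out numerically and reproduce the stated bounds exactly.
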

\begin{proof}
Let $n$ be the number of qubits involved in the algorithm.
We denote the magic rounding procedure by $\mathcal{M}$ as well as in \Cref{sec:qrao}.
By definition,
\begin{align*}
    \mathbb{E}[\gamma]&=\mathbb{E}\left[\frac{\mathrm{Tr}[\tilde{H}_{relax}\mathcal{M}^{\otimes n}(\tilde{\rho}_{relax})]}{OPT}\right]\\
    &=\frac{1}{OPT}\cdot\left(\frac{|E(G)|}{2}+\mathrm{Tr}\left[\left(\tilde{H}_{relax}-\frac{|E(G)|}{2}I^{\otimes n}\right)\cdot\Delta_{\frac{7}{9}}^{\otimes n}(\tilde{\rho}_{relax})\right]\right).
\end{align*}
The operator $\tilde{H}_{relax}-\frac{|E(G)|}{2}I^{\otimes n}$ can be divided into the sum of $1$-local operators and the sum of $2$-local operators by definition in \Cref{eq:tetra_hamiltonian} like the following:
\begin{equation*}
    \tilde{H}_{relax}-\frac{|E(G)|}{2}I^{\otimes n}=-\frac{3}{2}\sum_{\substack{(i,j)\in E(G):\\ \mathrm{Q\_idx(i)}\neq\mathrm{Q\_idx(j)}}}P_iP_j-\frac{\sqrt{3}}{2}\sum_{\substack{(i,j)\in E(G):\\ \mathrm{Q\_idx(i)}=\mathrm{Q\_idx(j)}=k}}Z_k.
\end{equation*}
From the assumption,
\begin{align*}
    &\mathrm{Tr}\left[\left(\tilde{H}_{relax}-\frac{|E(G)|}{2}I^{\otimes n}\right)\cdot\tilde{\rho}_{relax}\right]\\
    &=\mathrm{Tr}\left[\left(-\frac{3}{2}\sum_{\substack{(i,j)\in E(G):\\ \mathrm{Q\_idx(i)}\neq\mathrm{Q\_idx(j)}}}P_iP_j\right)\cdot\tilde{\rho}_{relax}\right]\\
    &\ \ \ \ +\mathrm{Tr}\left[\left(-\frac{\sqrt{3}}{2}\sum_{\substack{(i,j)\in E(G):\\ \mathrm{Q\_idx(i)}=\mathrm{Q\_idx(j)}=k}}Z_k\right)\cdot\tilde{\rho_{relax}}\right]\\
    &\geq OPT-\frac{|E(G)|}{2}.
\end{align*}
For simplicity, we define the variables $x$ and $y$ defined below:
\begin{align*}
    x&:=\frac{\mathrm{Tr}\left[\left(-\frac{3}{2}\sum_{\substack{(i,j)\in E(G):\\ \mathrm{Q\_idx(i)}\neq\mathrm{Q\_idx(j)}}}P_iP_j\right)\cdot\tilde{\rho}_{relax}\right]}{|E(G)|},\\
    y&:=\frac{\mathrm{Tr}\left[\left(-\frac{\sqrt{3}}{2}\sum_{\substack{(i,j)\in E(G):\\ \mathrm{Q\_idx(i)}=\mathrm{Q\_idx(j)}=k}}Z_k\right)\cdot\tilde{\rho_{relax}}\right]}{|E(G)|}.
\end{align*}
Then the assumption can be rewritten as
\begin{equation}
    x+y\geq\epsilon.
    \label{eq:assumption}
\end{equation}
The number of edges whose endpoints are associated with different (or the same) qubits is $|E(G)|\lambda$ (or $|E(G)|(1-\lambda)$). Combining this fact with Lemma 12, we have
\begin{align}
    &-\frac{3}{2}\lambda\leq x\leq\frac{3}{2}\lambda,\label{eq:xrange}\\
    &-\frac{\sqrt{3}}{2}(1-\lambda)\leq y\leq\frac{\sqrt{3}}{2}(1-\lambda).\label{eq:yrange}
\end{align}
From the self-adjointness of the single qubit depolarizing channel and the Lemma 11,
\begin{align*}
    \mathbb{E}[\gamma]&=\frac{1}{OPT}\cdot\left(\frac{|E(G)|}{2}+\mathrm{Tr}\left[\Delta_{\frac{7}{9}}^{\otimes n}\left(\tilde{H}_{relax}-\frac{|E(G)|}{2}I^{\otimes n}\right)\cdot\tilde{\rho}_{relax}\right]\right)\\
    &=\frac{|E(G)|}{OPT}\cdot\left(\frac{1}{2}+\frac{4}{81}x+\frac{2}{9}y\right)\\
    &=\frac{\frac{1}{2}+\frac{4}{81}(x+\frac{9}{2}y)}{\frac{1}{2}+\epsilon}.
\end{align*}
By minimizing $x+\frac{9}{2}y$ under \Cref{eq:xrange,eq:yrange,eq:assumption}, we have
\begin{equation*}
    x+\frac{9}{2}y\geq\max\left\{-\frac{7\sqrt{3}}{4}(1-\lambda)+\epsilon,-\frac{21}{4}\lambda+\frac{9}{2}\epsilon\right\}.
\end{equation*}
As a result, we obtain the expected approximation ratio bound:
\begin{equation*}
    \mathbb{E}[\gamma]\geq\max\left\{\frac{81-14\sqrt{3}+14\sqrt{3}\lambda+8\epsilon}{81+162\epsilon},\frac{27-14\lambda+12\epsilon}{27+54\epsilon}\right\}.
\end{equation*}
\end{proof}

Consider the condition of $\lambda$ and $\epsilon$ when our quantum relaxation has non-obvious approximation ratio $>\frac{1}{2}$.
\begin{align}
    &\mathbb{E}[\gamma]>\frac{1}{2}\\
    &\iff\left(\frac{81-14\sqrt{3}+14\sqrt{3}\lambda+8\epsilon}{81+162\epsilon}>\frac{1}{2}\right)\lor\left(\frac{27-14\lambda+12\epsilon}{27+54\epsilon}>\frac{1}{2}\right)\\
    &\iff\overline{\left(\frac{81-14\sqrt{3}+14\sqrt{3}\lambda+8\epsilon}{81+162\epsilon}\leq\frac{1}{2}\right)\land\left(\frac{27-14\lambda+12\epsilon}{27+54\epsilon}\leq\frac{1}{2}\right)}\\
    &\iff\begin{cases}0\leq\lambda\leq1&\mathrm{if}\ \epsilon<\frac{81-\sqrt{3}}{146+30\sqrt{3}}\approx0.4004\\
    0\leq\lambda<\frac{27}{28}-\frac{15}{14}\epsilon,-\frac{27\sqrt{3}}{28}+1+\frac{73\sqrt{3}}{42}\epsilon<\lambda\leq 1&\mathrm{if}\ 0.4004\approx\frac{81-\sqrt{3}}{146+30\sqrt{3}}\leq\epsilon\leq\frac{1}{2}\end{cases}\label{eq:lambda_range}
\end{align}
\Cref{fig:lambda_range} is the plot of the relation in \Cref{eq:lambda_range}.
The orange-colored part corresponds to the condition in \Cref{eq:lambda_range}.
The gray-colored part in the plot represents the condition of $\epsilon$ and $\lambda$ where our space compression ratio preserving quantum relaxation has no non-obvious approximation ratio bound.
If the graph instance has a relatively small MaxCut value (i.e. the gain $\epsilon<0.4004$), the space compression ratio preserving quantum relaxation has a non-trivial approximation ratio bound for arbitrary $\lambda$.
It means that we do not have to care about anything when assigning vertices to the qubits in the preprocessing.
\begin{figure}[tb]
    \begin{tabular}{cc}
        \begin{minipage}[t]{0.45\hsize}
            \centering
            \includegraphics[height=6cm]{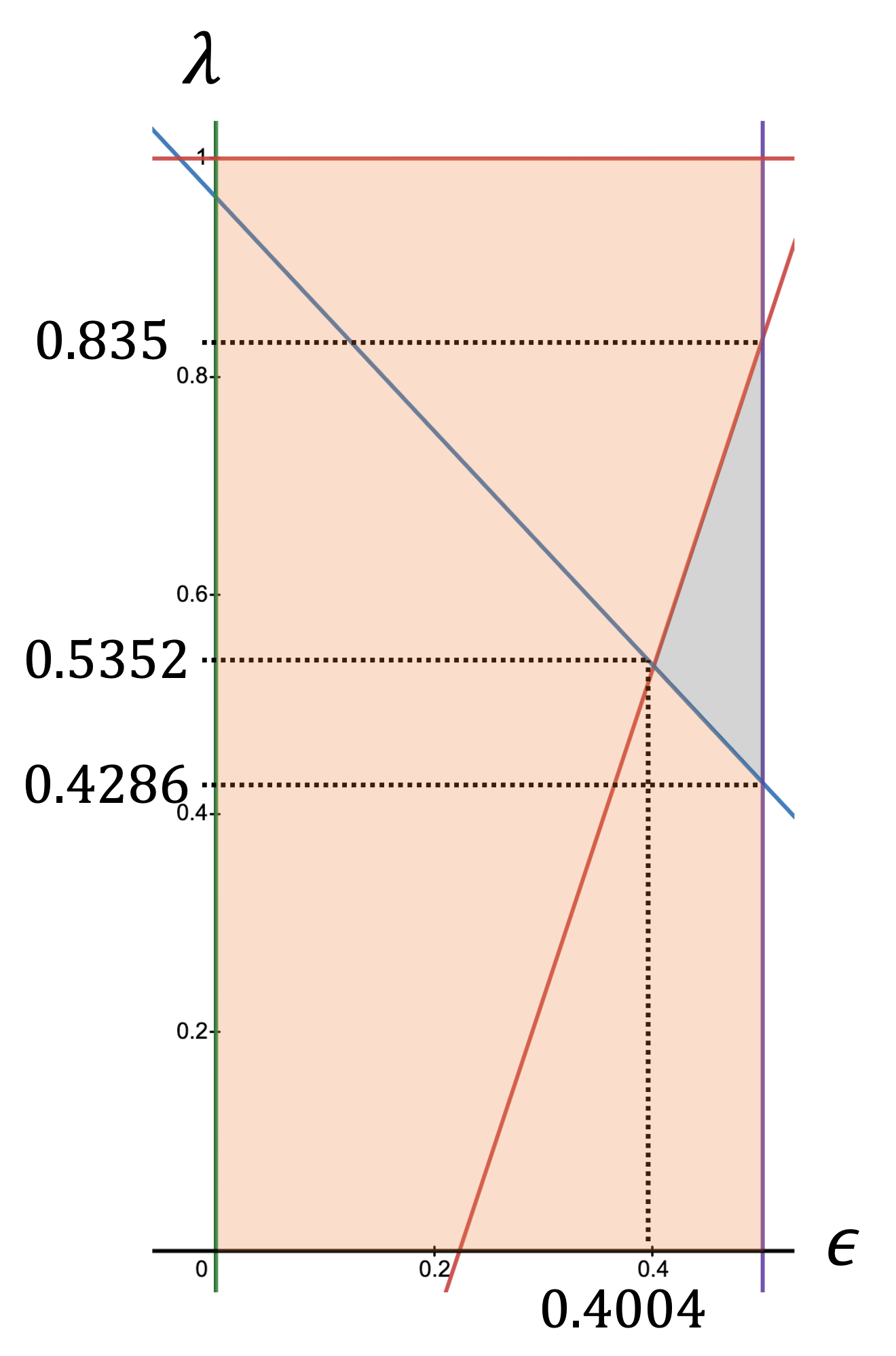}
            \subcaption{The plot of the range of $\lambda$ and $\epsilon$ with the non-obvious approximation ratio (orange-colored part). The horizontal axis corresponds to $\epsilon$ and the vertical axis corresponds to $\lambda$. The gray-colored part represents the condition that the space compression ratio preserving quantum relaxation has no meaningful approximation ratio bound.}
            \label{fig:lambda_range}
        \end{minipage} &
        \begin{minipage}[t]{0.45\hsize}
            \centering
            \includegraphics[height=6cm]{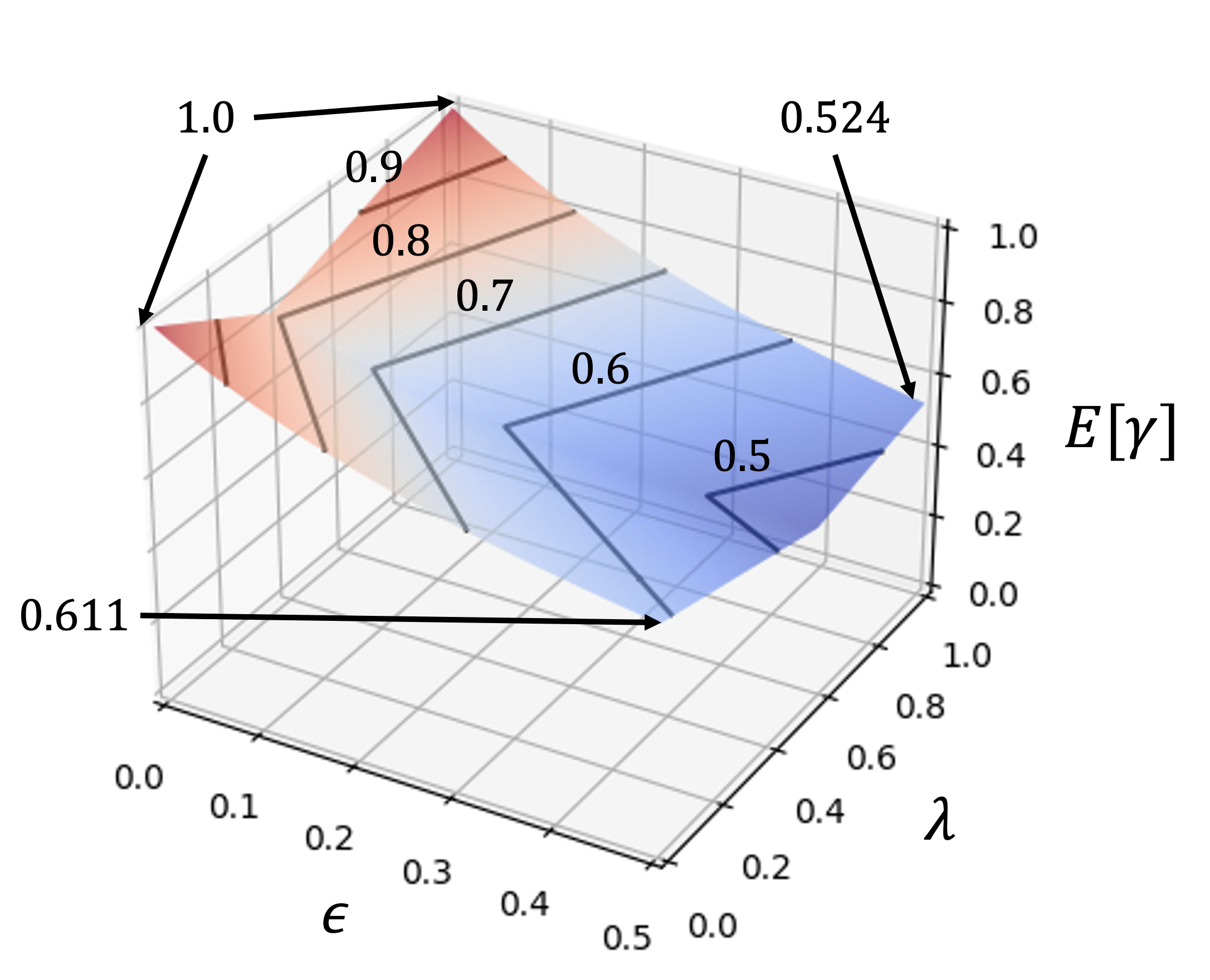}
            \subcaption{The 3D plot of the expected approximation ratio of the space compression preserving quantum relaxation for various $\epsilon$ and $\lambda$. Contours designate the values $\mathbb{E}[\gamma]=0.5, 0.6, 0.7, 0.8, 0.9$. The values of $\mathbb{E}[\gamma]$ for $\epsilon=0, 0.5$ and $\lambda=0, 1.0$ are also shown in the figure.}
            \label{fig:ratio_dist}
        \end{minipage}
    \end{tabular}
    \caption{The condition that the approximation ratio bound exceeds $\frac{1}{2}$ in the space compression ratio preserving quantum relaxation}
\end{figure}

By calculating the approximation ratio for each pair of $\epsilon$ and $\lambda$, we can analyze the performance of the algorithm more in detail.
\Cref{fig:ratio_dist} is the 3D plot of the approximation ratio bound in Theorem 12 for various $\epsilon$ and $\lambda$.
From the 3D plot, we have the following observations.
\begin{itemize}
    \item The approximation ratio becomes better as the bias between the number of edges encoded as the $1$-local Pauli operators and that of edges encoded as the $2$-local Pauli operators in the relaxed Hamiltonian $\tilde{H}_{relax}$.
    \item The approximation ratio bound is better for the instances whose gain $\epsilon$ is smaller.
\end{itemize}

\section{Conclusions}
\subsection{Summary of Results}
We theoretically extend the quantum relaxation in two ways.
Firstly, we extend the QRAO using the $(3,2)$-QRAC which encodes three classical bits into two qubits, i.e., achieving a 1.5 bit-to-qubit compression ratio. We proved the approximation ratio of the quantum relaxation using $(3,2)$-QRAC as $0.722$ which is better than the one for $(3,1)$- and $(2,1)$-QRAOs.
Secondly, we design a quantum relaxation whose bit-to-qubit compression ratio is always $2.0$ and independent of the density of the graph instances by using a novel encoding from two classical bits into a single qubit based on the formulation of $(3,1)$-QRAC.
We proved the approximation ratio of this space compression ratio preserving quantum relaxation by using the gain $\epsilon$ of the MaxCut and the ratio $\lambda$ of the edges whose endpoints are assigned to different qubits.
Though our approximation ratio bound $\max\left\{\frac{81-14\sqrt{3}+14\sqrt{3}\lambda+8\epsilon}{81+162\epsilon},\frac{27-14\lambda+12\epsilon}{27+54\epsilon}\right\}$ is not always larger than the obvious bound $\frac{1}{2}$, we analyze the condition that the bound becomes non-trivial $\left(\frac{1}{2}\right)$ and conclude that if the gain of the instance is not so large ($\epsilon<0.416$), our bound is larger than $\frac{1}{2}$.

\subsection{Future Directions}
\subsubsection{Approximation Ratio and Space Compression Ratio}
We consider the information-theoretic analysis of the trade-off between the approximation ratio and the space compression ratio of the quantum relaxation, which seems to contribute to revealing the theoretical limitation of the quantum-relaxation based approaches.
From the result of QRAO~\cite{fuller2021approximate} and our result of the quantum relaxation using $(3,2)$-QRAC, we conjectured the approximation ratio of the quantum-relaxation using a QRAC with the bit-to-qubit compression ratio $r$ as $\frac{1}{2}\left(1+r^{-2}\right)$:
\begin{conj}
    The expected approximation ratio of the quantum relaxation using $(m,n)$-QRAC for the MaxCut problem is conjectured to be
    \begin{equation*}
        \frac{1}{2}\left(1+\left(\frac{n}{m}\right)^2\right).
    \end{equation*}
    This approximation ratio bound assumes that the found relaxed state's energy exceeds the classical MaxCut value.
\end{conj}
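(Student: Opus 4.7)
The plan is to generalise the template used in Theorem 11 and in the proof of the $(3,2)$-QRAO approximation ratio. The essence of both arguments is threefold: (i) the expected simultaneous-rounding channel $\Phi$ is self-adjoint with respect to the trace inner product; (ii) it shrinks every encoding operator by a common factor $c$; and (iii) the standard calculation then produces the bound $\tfrac{1}{2}(1+c^2)$ at the worst case $OPT=|E(G)|$. The conjecture therefore reduces to showing $c=n/m$ for a suitably defined rounding channel associated with an optimal $(m,n)$-QRAC, since then $c^2=(n/m)^2$ delivers exactly the claimed ratio.

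First I would construct a generalised simultaneous rounding. For each $n$-qubit block that encodes $m$ classical bits, partition the $2^m$ encoded pure states $\{\rho_x\}_{x\in\{0,1\}^m}$ into $2^{m-n}$ groups of $2^n$ mutually orthogonal states each, generalising the parity-based partition used for $(3,2)$-QRAC and the opposite-vertex-pair partition used for $(3,1)$- and $(2,1)$-QRAC. Pick a group uniformly at random, perform the corresponding rank-$1$ projective measurement, and decode the $m$ bits according to the outcome. Averaging over the random choice of basis, the effective channel becomes
\begin{equation*}
\Phi(\tau) = \frac{1}{2^{m-n}} \sum_{g}\sum_{x\in g}\mathrm{Tr}[\rho_x\tau]\,\rho_x,
\end{equation*}
which is manifestly self-adjoint, settling (i). Next, identify operators $\sigma_1,\ldots,\sigma_m$ acting on the block and satisfying $\mathrm{Tr}[\sigma_i\rho_x]=k(-1)^{x_i}$ for a constant $k$ (the analogues of $\{X,Y,Z\}$ for $(3,1)$-QRAC and $\{X',Y',Z'\}$ for $(3,2)$-QRAC), and build the relaxed Hamiltonian $H_{\text{relax}}=\tfrac{1}{2}\sum_{e_{i,j}\in E(G)}(I-k^{-2}\sigma_i\sigma_j)$ so that $\mathrm{Tr}[H_{\text{relax}}\rho_{\text{opt}}]=OPT$.

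The crucial step is then to prove the shrinkage identity $\Phi(\sigma_i)=(n/m)\sigma_i$. Once this is established, the same calculation that appears in the proof of Theorem 11 transfers $\Phi$ onto $H_{\text{relax}}-\tfrac{|E(G)|}{2}I$ via self-adjointness, multiplies the MaxCutGain part of the energy by $(n/m)^2$, and yields the bound $\tfrac{1}{2}(1+(n/m)^2)$ under the assumption $\mathrm{Tr}[H_{\text{relax}}\rho_{\text{relax}}]\geq OPT$ together with the trivial bound $OPT\leq|E(G)|$.

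The hard part is the shrinkage identity itself. In this paper it was verified by direct calculation exploiting the strong geometric symmetries of each specific QRAC (cube for $(3,1)$, square for $(2,1)$, the two parity classes for $(3,2)$), but no comparable symmetry is available for general $(m,n)$. A principled route would be to relate the shrink factor to the optimal decoding success probability $p$ through a channel-capacity or Mančinska--Storgaard style bound, or alternatively to classify optimal $(m,n)$-QRACs up to local unitaries and show that the shrinkage factor is a QRAC invariant saturating at $n/m$ precisely for optimal codes. A secondary obstacle is the very first step: the existence of a balanced orthogonal partition of the encoded states into groups of size $2^n$ is nontrivial for arbitrary $(m,n)$ and may itself require structural constraints (for example an orthogonal design-like property of the code) that are not guaranteed a priori.
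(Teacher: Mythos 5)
The statement you are asked to prove is Conjecture~14 of the paper, and the paper itself offers \emph{no proof}: the authors explicitly state that the obstruction is that ``the concrete formulations of QRACs for general $m$ and $n$ are not known.'' Your proposal accurately reconstructs the proof template that the paper uses for the specific instances $(2,1)$, $(3,1)$ and $(3,2)$ --- self-adjointness of the averaged rounding channel, a uniform shrinkage factor $c$ on the encoding operators, and the worst-case evaluation at $OPT=|E(G)|$ giving $\tfrac{1}{2}(1+c^2)$ --- and you correctly observe that the conjecture would follow from $c=n/m$. This is consistent with the known cases ($c=\tfrac{1}{2},\tfrac{1}{3},\tfrac{2}{3}$ for $(2,1)$, $(3,1)$, $(3,2)$ respectively). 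However, your proposal does not close the argument, and you acknowledge as much: the two steps you defer --- the existence, for a general optimal $(m,n)$-QRAC, of a partition of the $2^m$ encoded states into $2^{m-n}$ orthonormal bases of the $2^n$-dimensional block, and the shrinkage identity $\Phi(\sigma_i)=(n/m)\sigma_i$ --- are precisely the open content of the conjecture. Neither is implied by the definition of a QRAC; the orthogonality property in the $(3,2)$ case (the paper's Equation for $\braket{\psi'(x)|\psi'(x')}=0$ within a parity class) is a special structural feature verified from the numerically obtained code, not a generic consequence of optimality, and there is no known argument that the shrinkage factor of whatever channel one constructs equals $n/m$ rather than some other code-dependent constant. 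Your suggested routes (relating $c$ to the decoding probability via the Man\v{c}inska--Storgaard bound, or classifying optimal codes up to local unitaries) are reasonable research directions but are not carried out.

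In short: the proposal is a faithful and well-organized account of \emph{why} the paper states this as a conjecture rather than a theorem, but it is not a proof, and it should not be presented as one. If it were spliced in as a proof, the argument would fail at the point where the balanced orthogonal partition and the identity $\Phi(\sigma_i)=(n/m)\sigma_i$ are invoked without justification.
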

Our space compression ratio preserving quantum relaxation is not included in the quantum relaxations mentioned in the above conjecture because it does not use the formulation of $(3,1)$-QRAC directly.
The difficulty of proving this conjecture lies in the point that the concrete formulations of QRACs for general $m$ and $n$ are not known.
The $(3,2)$-QRAC is obtained by numerical calculations, and it is hard to extend the rule of the construction of the QRAC to general $m$ and $n$.

\subsubsection{Hardness to Find Relaxed State Whose Energy Exceeds Classical MaxCut Value}
Also, considering the difficulty of the assumption of the proof of the approximation ratio of quantum relaxations, that is, the found relaxed state's energy exceeds the classical MaxCut value, is one of the future works.
Currently, searching for the relaxed state is executed by classical-quantum hybrid heuristics such as VQE~\cite{peruzzo2014variational}.
To analyze theoretically, we need to consider the approximability of the problem to find the maximum/ground eigenstate of the local Hamiltonian used in quantum relaxations.
We note again that to find the exact maximum/ground eigenstate is QMA-hard in general~\cite{kempe2006complexity}.
Also, because it is known that arbitrary quantum polynomial-time approximation algorithm cannot approximate the MaxCut problem with ratio $0.879+\epsilon$ for any $\epsilon>0$ under UGC and the assumption that polynomial hierarchy does not collapse~\cite{khot2002power,mossel2005noise,khot2007optimal,aaronson2010bqp}, at least at some space compression ratio $r$ satisfying $\frac{1}{2}\left(1+r^{-2}\right)\leq 0.879$ (see Conjecture 14), the assumption seems to become QMA-hard.
However not much is known about the limit of approximation for the local Hamiltonian problems even when we consider the classical approximation algorithms.

\subsubsection{Performance Analysis for MaxCutGain Problem}
As discussed in Appendix I\hspace{-1.2pt}I\hspace{-1.2pt}I of the original QRAO paper~\cite{fuller2021approximate}, quantum relaxation seems to perform well for the instances of the MaxCut problem with a small gain $\epsilon$.
Though the approximation ratio of the classical best approximation algorithm by Goemans and Williamson~\cite{goemans1995improved} is better than that of quantum relaxations, the GW algorithm's performance is bad for the instances with a small gain.
For instance, if $\epsilon=0.05$, GW outputs the cut of the size $OPT\times0.879=0.55\times|E(G)|\times0.879<0.5|E(G)|$.
The value $0.5|E(G)|$ is the trivial lower bound because randomly assigning binary variables generates the cut of the size $0.5|E(G)|$ in expectation.
On the other hand, quantum relaxation outputs the non-trivial cut of size
\begin{equation*}
    \left(0.5+\epsilon r^{-2}\right)|E(G)|>0.5|E(G)|
\end{equation*}
when $r>1$.
When we consider the MaxCutGain problem which evaluates the value of $\epsilon$ if the found relaxed state's energy exceeds the classical MaxCut value, quantum relaxation approximates it with the constant factor $r^{-2}$.
We note that there exists a classical approximation algorithm for the MaxCutGain problem which approximates the gain as $\Omega\left(\frac{\epsilon}{\log{\epsilon^{-1}}}\right)$~\cite{CharikarWirth2004} and this algorithm is known to be tight under UGC~\cite{v005a004}.
Then, some possibilities can be considered.
The first possibility is that finding a relaxed state whose energy exceeds the classical MaxCut value is also QMA-hard for arbitrary space compression ratio $r$.
The other possibility is that the analog of UGC does not hold in polynomial quantum time calculation.
There may be other possibilities as well.

\subsubsection{Other Candidates than QRACs}
Quantum random access codes are used in quantum relaxation to relax the classical problem and compress the classical bits.
However, we do not have to focus only on QRACs, and there are some other space compression approaches.
For example, symmetric informationally-complete positive operator-valued measurement (SIC-POVM)~\cite{renes2004symmetric} may be used to formulate the different types of quantum relaxations.
For $d$-dimensional Hilbert space, a POVM which consists of at least $d^2$ operators spanning the space of self-adjoint operators is called informationally complete POVM (IC-POVM).
We note that a mutually unbiased basis (MUB) is known to be IC-POVM~\cite{ivonovic1981geometrical,wootters1989optimal}.
Consider a set of $d^2$ projectors $\{\Pi_i\}_{i\in[d^2]}$ satisfying
\begin{equation*}
    \mathrm{Tr}[\Pi_i\Pi_j]=\frac{d\delta_{ij}+1}{d+1}
\end{equation*}
and set $F_i:=\frac{1}{d}\Pi_i$.
Then, $\{F_i\}_{i\in[d^2]}$ defines a minimal IC-POVM, and it is called SIC-POVM because of its symmetrical properties.
The encoding used in our space compression ratio preserving quantum relaxation $\{\tilde{\rho}_{x_1x_2}\}_{x_1,x_2\in\{0,1\}^2}$ forms the SIC-POVM $\{\frac{1}{4}\tilde{\rho}_{x_1x_2}\}_{x_1,x_2\in\{0,1\}^2}$.
For a general $n$-qubit system, there exists a SIC-POVM with $4^{n}$ operators.
Because of the use of POVMs, we have to take the overlaps of different operators into account, and it is known that a POVM can be converted into a projective measurement by introducing ancilla qubits and expanding the Hilbert space (Naimark's dilation)~\cite{naimark1943representation}.
However, the definition of the corresponding Hamiltonian is non-trivial even for the simplest SIC-POVM $\{\frac{1}{4}\tilde{\rho}_{x_1x_2}\}_{x_1,x_2\in\{0,1\}^2}$.

\section*{Acknowledgements}
R.R. would like to thank Sergey Bravyi and Pawel Wocjan of IBM T. J. Watson Research Center for the discussion on QRACs, as well as Yohichi Suzuki of Quantum Computing Center, Keio Univ. 
K.T. would like to thank Kaito Wada of Keio University for the discussion of future works about the use of SIC-POVM.
We would like to thank Ruho Kondoh for pointing out an error in the early version of this draft. 
\bibliographystyle{plain}
\bibliography{myref}

\end{document}